\newtheorem{remark}{Remark}
\newtheorem{pro}{Proposition}
\begin{document}
		\title{ \huge Joint Radio Resource Allocation and 3D Beam-forming in MISO-NOMA-based Network: Profit Maximization for Mobile Virtual Network Operators}
			\author{\IEEEauthorblockN{Abulfazl Zakeri and  Nader Mokari, \IEEEmembership{Senior Member, IEEE}, and Halim Yanikomeroglu, \IEEEmembership{Fellow, IEEE}}\\	
		%
		\thanks{A. Zakeri and N. Mokari are with the Department of ECE,
			Tarbiat Modares University, Tehran, Iran (email: Abolfazl.zakeri@modares.ac.ir and nader.mokari@modares.ac.ir).
			H. Yanikomeroglu is with the Department of Systems and Computer
			Engineering, Carleton University, Ottawa, ON, K1S 5B6, Canada.
		}}	
	\maketitle
	\vspace{-2cm}
		%
	\begin{abstract}
		Massive connections and high data rate services are key players in 5G ecosystem and beyond. To satisfy the requirements of these types of services, 
		non orthogonal multiple Access (NOMA) 
		and 3-dimensional beam-forming (3DBF)  can be exploited.  
		In this paper, we devise a novel radio resource allocation and 3D multiple input single output (MISO) BF algorithm in NOMA-based heterogeneous networks (HetNets) at which our main aim is to maximize the profit of mobile virtual network operators (MVNOs).
		 \textcolor{black}{To this end, we consider multiple infrastructure providers (InPs) and MVNOs serving multiple users. 
		Each  InP has multiple access points as base stations (BSs) with specified spectrum and multi-beam antenna array in each transmitter  that share its spectrum with MVNO's users by employing NOMA.} To realize this, we formulate a novel optimization problem at which the main aim is to maximize the revenue of MVNOs, subject to resource limitations and quality of service (QoS) constraints. Since our proposed optimization problem is non-convex and mathematically intractable,
		we transform it into a convex one by introducing a new optimization variable and
		 converting the variables with adopting successive convex approximation.
		 More importantly,  the proposed solution is assessed and compared  with the alternative search method and the optimal solution \textcolor{black}{that is obtained with adopting the exhaustive search method.} 
		 In addition, it is studied from the computational complexity, convergence, \textcolor{black}{and performance perspective.}
		   Our simulation results demonstrate that NOMA-3DBF has better performance and increases system throughput and MVNO's revenue compared to orthogonal multiple access with 2DBF by approximately $64$\%. Especially, by exploiting 3DBF the MVNO's revenue is improved nearly $27$\% in contrast to 2DBF in  high order of antennas.
		\\
		\emph{\textbf{Index Terms---}} Resource allocation, MISO, NOMA, 3DBF, optimization, MVNO, revenue, HetNet.
	\end{abstract}
	\maketitle
	\section{{introduction}} 
\subsection{State of The Art}
	 \IEEEPARstart{W}{ith} considering exponentially growth of wireless data traffic and diverse novel services such as enhanced mobile broadband, massive internet of thing (mIoT), and critical communication in the fifth  generation (5G) of wireless networks, efficient, flexible, and  on-demand resource utilization are become very important \cite{8642812, ding2017survey, 7414384}. 
	To fulfill these services requirements, high frequency band and advanced access technologies are proposed for new radio\footnote{New radio (NR) is the name of 5G random access technology.} of 5G access technology \cite{al2019sum, huaweisp,MedaiTek,el2018key}.
	In 5G, due to new emerging services and ultra high density of users, spectral and energy efficiency need be increased significantly. 
	To achieve these goals, advanced physical layer technologies, e.g., high order multi input and multi output (MIMO) and non-orthogonal multiple access (NOMA) are proposed and investigated for 5G \cite{el2018key, nadeem2018elevation, NOMA, 8695086, 8375979,vaezi2019interplay,8786250}. 
	NOMA-based networks have an array of benefits  such as  high spectral efficiency (SE) and accommodate massive connectivity compared to conventional ones multiple access techniques. \textcolor{black}{	Moreover, NOMA  reduces radio frequency chains and hardware cost \cite{power2018joint, NOMA}.}
	On the other hand, multiple antennas systems such as multiple input single output (MISO) can significantly improve SE and
	reliability. \textcolor{black}{In MISO systems,  these gains are achieved without increasing 
the  size, cost, and battery life, i.e., the lower level of processing requires less battery consumption at the receivers sides, overally.} 
	Exploiting joint NOMA and multiple antennas systems 
	  have  significant advantages in terms of improving 
	SE by utilizing each resource block more than one in each base station (BS) and applying spatial diversity, respectively \cite{lTEMIMOenhanced, 8648507}. 
 Furthermore, the beam pattern in physical layer has significant impacts on the performance of wireless network; especially 
  for the high losses of propagation at high frequency bands \cite{MedaiTek}.   	
	Three-dimensional beam-forming (3DBF) combining  the horizontal and vertical pattern allocation   
	 with large number of antennas is one of the most promising technologies for 5G \cite{kammoun2018design, shafin2014channel, cui2018optimal, 8718197, MedaiTek}.
	 As studied in \cite{7582424}, interference management is a key limiting factor in increasing the capacity of heterogeneous networks (HetNets). To tackle this issue,
	 by exploiting 3DBF, the vertical and horizontal beams are directed such that 
	 the power of intercell interference can be reduced significantly \cite{li2013dynamic}. Moreover, in contrast to two-DBF (2DBF), i.e., just  horizontal (azimuth) antenna pattern  is adjusted, 3DBF provides tracking of users in both the horizontal and vertical and has negligible side lobes \cite{8539578}. 
	 
Nevertheless, combining  multiple antennas systems  and NOMA with 3DBF can increase SE and capacity of wireless networks multiple times \cite{vaezi2019interplay} and address heterogeneous service requirements such as, low latency, high data rate, and reliability. Hence, these physical layer technologies are appropriate candidates to address the  requirements of 5G and beyond. Accordingly, studying 3DBF in NOMA-based MISO for HetNet from the performance, computational complexity, signaling overhead, and pricing perspectives is the main focus of this paper.
	\subsection{{Related Works}}
 \textcolor{black}{Related works on this article can be discussed in the two main categories:  1) NOMA-based MISO networks with various objective functions such as  throughput maximization and power consumption minimization; 2) 2DBF  and 3DBF in NOMA and (or) MISO-based networks.} 
 \subsubsection{{NOMA-based MISO Networks}}
	Due to the pivotal role of NOMA with MISO on the performance of physical layer of wireless networks, recently, many researches are done in this area 
	 \cite{xiao2018opportunistic,twodimention,MIMO-NOMA,mimo-nomacsi, al2019energy}. 
	Optimal  power allocation  for NOMA-based MISO satellite networks with power minimization is proposed in \cite{alhusseini2019optimal}. 
 Robust radio resource allocation in power-domain NOMA (PD-NOMA)-based networks with adopting matching theory is studied in  \cite{rezaei2019robust}. \cite{8686217} studies a radio resource allocation in MISO-enabled cloud radio access network. Moreover, energy efficiency for NOMA-based MISO in a single cell network has been studied in \cite{al2019energy1}. The authors in \cite{8695086}, propose a new power minimization problem in MISO-enabled PD-NOMA-based networks. They consider the minimum rate constraint with steering beam forming and user clustering variables. 
	 \subsubsection{{2DBF and 3DBF in MISO Networks}}
	Up to now, 2DBF and 3DBF multiple antennas systems \textcolor{black}{ are received much attention and are studied} in many researches from different aspects \cite{alavi2017robust, li2015sum, fan2017exploiting, lTEMIMOenhanced, 8718092}. 
	The authors in \cite{8642812}  propose  BF and power allocation with designing two optimization problems  for the terrestrial network. 
	A robust power minimization BF for NOMA-based systems with considering imperfect channel state information (CSI) is studied in \cite{alavi2017robust}. Two dimensional precoding paradigm is applied for 3D massive antennas systems in \cite{twodimention}.  
\textcolor{black}{In \cite{8718197},  the user experience\footnote{\textcolor{black}{User experience data rate is defined as the data rate that, under loaded conditions, is available with 95\% probability\cite{ITU}.}} throughput in relay systems with  exploiting 3DBF is evaluated.}  
	The authors in \cite{al2019sum}, study BF in NOMA-based MISO networks with considering two performance metrics, namely,  sum data rate and fairness. 	Two-step BF with considering a multiuser NOMA-based MISO system is proposed in \cite{8695086}. The first step is a BF nulling
	to interference reduction at users  in  power domain superimposed coding cluster and the second step is BF steering for the desired cluster to minimize power utilization.
 
 As can be concluded non of aforementioned works studies joint 3DBF in NOMA-based MISO in multicell networks  from the cost and performance perspectives.  Moreover, almost all of them consider single cell network \cite{al2019energy1,power2018joint,8119791,alavi2017robust,8695086}. On the other hand, 3DBF is very appropriate choice to reduce interference and losses of propagation at high frequency bands (e.g., mmWave communication) that is desired for 5G and beyond. 
\subsection{{Motivations and Contributions}}
 To the best of our knowledge,
 there is no work on 
 pricing model for 3DBF MISO with the NOMA technique in the HetNet framework.
 \textcolor{black}{On the other hand,  employing both 3DBF in massive antenna systems and NOMA not only have a significant improvement on SE, coverage, etc, \textcolor{black}{but also they can reduce the overall cost of wireless networks, especially in high frequency band communication.}
 	These reasons are motivated us} to study new pricing models with the joint of 3DBF multi user MISO and NOMA in HetNet by considering infrastructure as a service.

	In this paper, we propose a novel joint radio resource allocation, user association, and 3DBF optimization problem
	with NOMA-based MISO in HetNet. 
	 To this end, 
	we consider the case of multi infrastructure providers (InPs) which provide infrastructure as a service for multi mobile virtual
	network operators (MVNOs) and design a novel pricing-based optimization problem under some constraints. \textcolor{black}{The main aim of the proposed optimization problem is to maximize the revenue of MVNOs' users subject to the transmit power, subcarrier allocation, successive interference cancellation (SIC), and QoS constraints.} 
	 
	 Our main contributions are summarized as follows:
	\begin{itemize}
		\item 
		We propose a new 3D beam and radio resource allocation in downlink  of NOMA-based MISO HetNet with proposing a novel pricing model \textcolor{black}{as a mixed integer non-linear programing problem; with aiming to maximize the revenue of MVNOs' users}. \textcolor{black}{We consider multiple InPs that have own physical network and hardware including the BSs that are equipped with multiple transmitters. The InPs server the MVNO's users based on the service level agreement with MVNOs which is function of cost and revenue.} \textcolor{black}{Since our considered framework comprises infrastructure as a service with enabling-virtualization of resources, it is appropriate and applicable to employ network slicing as a  key enabler of 5G \cite{7926923} by considering each MVNO's resources as a slice.  }
	\item \textcolor{black}{We propose} three methods, namely, jointly solving continues and integer variables (JS-CIV), alternative search method (ASM\footnote{The basic idea behind the ASM method which is a well-known suboptimal solution for non-convex and NP-hard problems, is that the main optimization problem is divided into some convex subproblems and solved each of them iteratively.}), and optimal solution  to solve the optimization problem  which is non-convex and mathematically intractable.
\textcolor{black}{Based on the simulation results, JS-CIV outperforms ASM by approximately $7$\% and its the  optimality gap is nearly $8$\%.}	
	 \item 
	 We investigate the performance of the proposed system under different geographical locations as deployment scenarios, such as indoor hotspot, rural, etc. Moreover, we study 2D and 3D channels from the signaling overhead perspective.  
	 \item 
	 Our simulation results depict that the cost of power and MVNO's revenue incorporate the influence of transmission technologies, e.g., 3DBF with order of antennas and multiple access technique. 
	\end{itemize}
\subsection{{Paper Organizations}}
This article is arranged as follows. Section \ref{systemmodelandproblemformulations} displays the system model and the problem formulations. Section \ref{solutions} presents solution of the problem.
 The simulation
results are presented in Section \ref{simulations}. Finally, the concluding remarks of this paper is drawn in Section \ref{conclusions}.\\
	 {Notations}: \textcolor{black}{ Bold upper and lower case letters denote
	matrices and vectors, respectively. Transpose and conjugate
	transpose are indicted by ${(\cdot)}^T$ and $(\cdot)^{\dagger}$, respectively. $\|\cdot\|$
	represents the Euclidean norm, $| \cdot |$ denotes the absolute value, and $\odot$ indicates Hadamard or element-wise product. $\Bbb{E}\{.\}$ is the expectation operator.
	$\mathcal{C}$ represents the field of complex numbers.
	 $x_i$ denotes the $i$-th entry of vector $\bold{x}$ and $\bold{0}_{a}$ denotes all zero vector with size $a$. Moreover, 
$\Re\{.\}$  and $\Im\{.\}$ are the real and imaginary parts of the associated argument, respectively.}
	\section{{System Model and Problem Formulation}}\label{systemmodelandproblemformulations}
	\subsection{{System Model}}
	We consider a scenario with multiple InPs and multiple MVNOs 
	where each MVNO severs its users 
that are  placed on different locations over the total coverage area of the network. 
\textcolor{black}{In other words, we consider a virtualized case where the physical resources provided by several InPs are
	divided into several virtual resources each of which can be used by one MVNO. }
	 We denote the set of InPs by $i\in\mathcal{I}=\{1,\cdots,I\}$, the set of MVNOs by $v\in\mathcal{V}=\{1,\cdots,V\}$, and the set of macro BSs (MBSs) and femto BSs (FBSs) of InP $i$ by $f_i\in\mathcal{F}_i=\{0,\cdots,F_i\}$, where $f_i=0$ is the MBS. Hence, the set of the BSs is $\mathcal{F}=\cup_{i\in\mathcal{I}}\mathcal{F}_i$ and $F=|\mathcal{F}|$.
	 A typical example of the considered 
	  system model is depicted in Fig.\ref{system model}. As seen, each InP has some BSs that supports users of different MVNOs. Moreover, we consider a central unit as a management system  that manages the network centralizely, and  solves the considered optimization problem and obtains the corresponding optimization variables. 
	 We assume that the BSs are equipped with multiple antennas, i.e., $M_\text{T}$ antennas and the receivers are single antenna.
	 We consider that the set of all downlink users
	   are randomly distributed in the network and this set is denoted by $\mathcal{K_{\text{Total}}}=\{1,\cdots,K\}$, which is the union of all MVNOs users, i.e., $\mathcal{K_{\text{Total}}}=\cup_{v\in\mathcal{V}}\mathcal{K}_v$. In this paper, we focus on the  downlink scenario of 3DBF in NOMA-based MISO HetNet.
	 
	 Furthermore, we consider that the total bandwidth of each InP $i$, which is non overlapping with the other InPs, i.e., $\text{BW}_i$ is divided into $N_i$ subcarriers with equal bandwidth\footnote{In this paper, we assume that all InPs have the dedicated frequency band and the existing bandwidth is not shared with each other.}.
	  Moreover, we assume that each BS $f_i$ of InP $i$ has $N_i$ subcarriers. We also suppose that perfect CSI is available at each BS and the channel gain from BS $f_i$ to user $k$ over subcarrier $n_i$ is denoted by $\textbf{h}_{f_i,n_i,k}=[h_{f_i,n_i,k}^{m}]\in \mathcal{C}^{M_\text{T}\times 1}$. 
	  The beam vector assigned by BS $f_i$ to user $k$ over subcarrier $n_i$ is denoted by $\textbf{w}_{f_i,n_i,k}=[w_{f_i,n_i,k}^{m}]\in \mathcal{C}^{M_\text{T}\times 1}$. 
	  The maximum allowable transmit power of each FBS is $P_{\max}^{\text{FBS}}$ and 
	  for each MBS is $P_{\max}^{\text{MBS}}$. The other parameters are summarized in Table \ref{table-00}. 
	 	\begin{table}[h]
	 	\renewcommand{\arraystretch}{1.11}
	 	\centering
	 	\caption{Notations and their definitions}
	 	\label{table-00}
	 	\begin{tabular}{| c| l| }	
	 		\hline
	 		\textbf{Notation}& \textbf{Definition}\\\hline
	 		$\mathcal{K}_v$&Set of MVNO $v$ users\\ \hline
	 		$\mathcal{K}_{\text{Total}}$ &Set of  all users\\ \hline
	 		$\mathcal{N}_i$ &Set of InPs $i$ subcarriers\\ \hline
	 		$\mathcal{V}$ &Set of MVNOs\\ \hline
	 		$\mathcal{I}$ &Set of InPs\\ \hline
	 		$\mathcal{F}_i$ &Set of BSs in InP $i$\\ \hline
	 		$M_T$ & Number of transmit antennas for each BS\\ \hline
	 		${\rho_{f_i,n_i,k}}$ &Subcarrier assignment and user association indicator \\&  for user $k$ on subcarrier $n_i$ in BS $f_i$\\\hline
	 		$\gamma_{f_i,n_i,k}$& Received SINR
	 		 of user $k$ on\\& subcarrier $n_i$ from BS $f_i$\\\hline
	 		$L$& Maximum number of users that can be \\&assigned to each subcarrier \\\hline
	 		${\sigma^2_{f_i,n_i,k}}$ & Noise power at user $k$ on subcarrier  $n_i$ from BS $f_i$\\\hline
	 		$s_{f_i,n_i,k}$ &Transmit signal at user $k$ on subcarrier \\&$n_i$ from BS $f_i$\\\hline
	 		$w_{f_i,n_i,k}^{m}$& Beam weight for user $k$ on subcarrier $n_i$ in BS $f_i$ \\&on antenna $m$
	 		\\\hline	
	 	\end{tabular}
	 \end{table}
	\begin{figure*}
		\centering

		\includegraphics[width=.72\textwidth]{./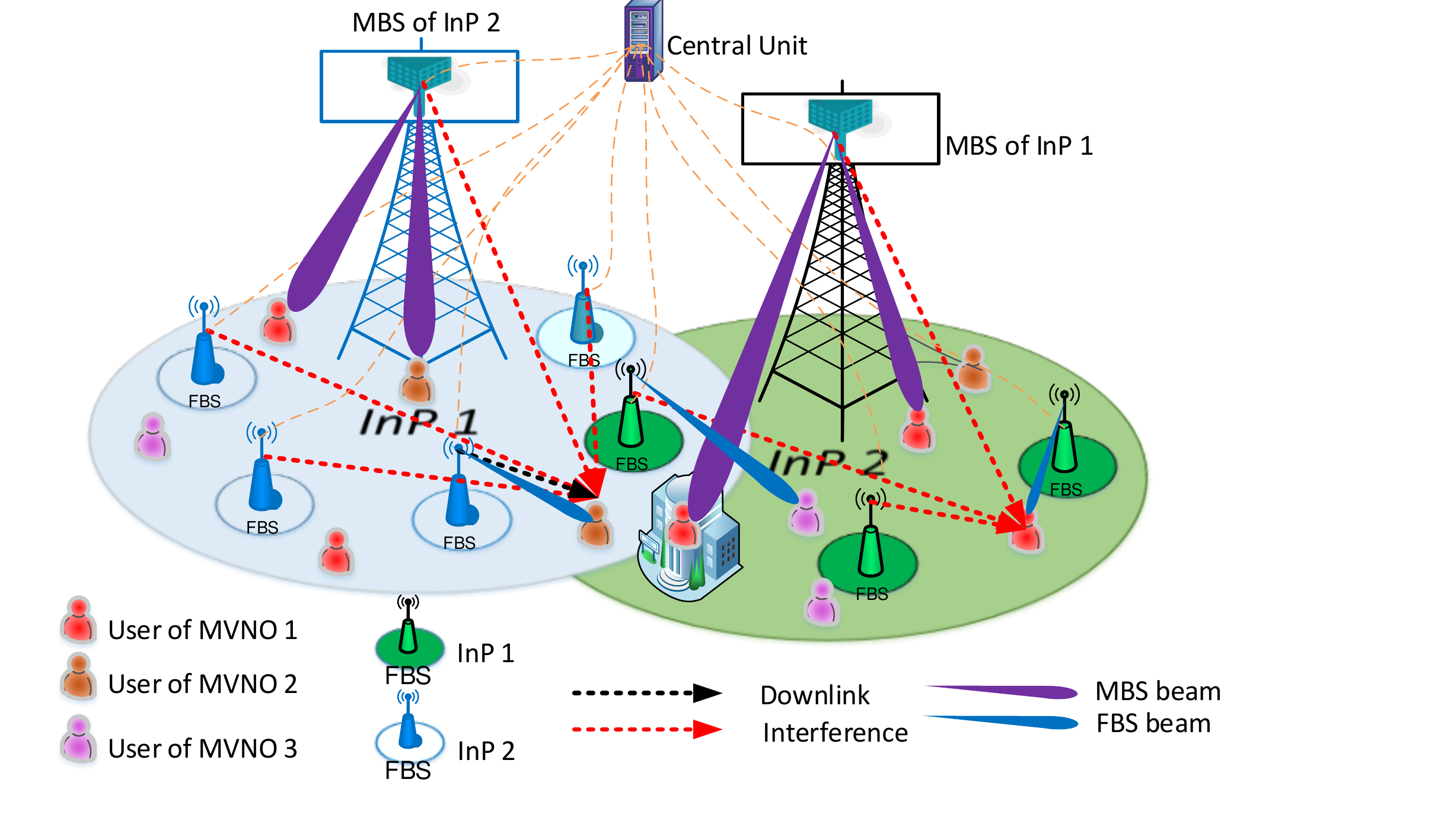}
		\caption{Considered system model. }
		\label{system model} 
	\end{figure*}
	\subsection{{2D and 3D Channel Models}}
	 In
	  3D directional channel model, the considered channel gain 
	  between BS $f_i$ and user $k$ on subcarrier $k$  is 
 formulated as follows \cite{li2015sum, rezaei2019robust},
	\textcolor{black}{
	\begin{align}
	\bold {h}_{f_i,n_i,k}={\beta}_{f_i,n_i,k}^{1/2}\boldsymbol{\Gamma}_{f_i,n_i,k},
	\end{align}
	where $ \boldsymbol{\Gamma}_{f_i,n_i,k}=[{\Gamma}_{f_i,n_i,k}^{m}]$ 
	is small scale fading 
	 and $	\beta_{f_i,n_i,k}$ denotes the large scale channel factor from  BS $f_i$ to user $k$ over subcarrier $n_i$ which is formulated in Section \ref{3dchannel}.} 
	\subsubsection{{3D Channel Model}}\label{3dchannel}
	To formulate the 3D directional large scale fading, i.e., ${\beta}_{f_i,n_i,k}$, first, we calculate 
	the horizontal
	and vertical angles between BS $f_i$ and user $k$, respectively, as follows:
	\begin{align}
	&\varphi_{f_i,k}(x_{f_i},y_{f_i})=\arctan\left(\frac{y_{f_i}-y_{k}}{x_{f_i}-x_{k}}\right),\\
	&\theta_{f_i,k}(x_{f_i},h_{f_i})=\arctan\left(\frac{ h_{f_i}}{x_{f_i}-x_{k}}\right),
	\end{align}
	where $(x_{f_i},y_{f_i})$ are 2D coordination of BS $f_i$, $h_{f_i}$ is the height of BS $f_i$,  and $(x_k,y_k)$ is 2D coordination of user $k\in \mathcal{K_{\text{Total}}}$.
	In order to obtain the large scale fading, the antenna gain\footnote{All gain values are in dB.} over subcarrier $n_i$ from BS $f_i$ to user $k$ is calculated by \eqref{3D gain}, \cite{lTEMIMOenhanced},
	\begin{align}
&	G_{f_i,n_i,k}(\varphi_{f_i,k},\theta_{f_i,k})=G_{f_i,n_i,k}^{\text{Hor}}(\varphi_{f_i,k})+G_{f_i,n_i,k}^{\text{Ver}}(\theta_{f_i,k}),\label{3D gain}
	\end{align}
	where $G_{f_i,n_i,k}^{\text{Hor}}(\varphi_{f_i,k})$ and $G_{f_i,n_i,k}^{\text{Ver}}(\theta_{f_i,k})$ are horizontal\footnote{The terms of horizontal and azimuth are used synonymously.} and vertical\footnote{The terms of vertical and elevation are used synonymously.} antenna patterns which are given by  
	 \cite{nadeem2018elevation}, \cite{fan2017exploiting}, \cite{kammoun2018design, nam2015full, li2015sum},
	\begin{align}
	\label{ghor}
	&G_{f_i,n_i,k}^{\text{Hor}}=G_{\max,\text{E}}-\min\left(12\left(\frac{\varphi_{f_i,k}-\hat{\varphi}_{0}}{\hat{\varphi}_{3dB}}\right)^2,\text{SLL}_{\text{Az}}\right),\\\label{gver}
	& G_{f_i,n_i,k}^{\text{Ver}}=-\min\left(12\left(\frac{\theta_{f_i,k}-\theta_{f_i,n_i,k}^{\text{tilt}}}{\hat{\theta}_{3dB}}\right)^2,\text{SLL}_{\text{El}}\right),
	\end{align}
	where $\hat{\varphi}_{3dB}$ and $\hat{\theta}_{3dB}$ depict the half-power beam-width in the azimuth and the elevation patterns, respectively. Whereas $G_{\max,\text{E}}$ is the maximum directional element
	gain at the antenna bore-sight, $\text{SLL}_{\text{Az}}$ and $\text{SLL}_{\text{El}}$ are the azimuth and elevation slide lobe levels, respectively. Furthermore, $\hat{\varphi}_{0}$ indicates the fixed orientation angle of each BS array boresight relative to the $y$-axis.  $\theta_{f_i,n_i,k}^{\text{tilt}}$ denotes the optimization variable tilt of BS $f_i$ and subcarrier $n_i$ for user $k$ which is measured between the direct line passing across the peak
	of the beam and horizontal plane.

	Motivated by the previous discussion, we can formulate
	the large-scale fading function $\beta_{f_i,n_i,k}$ which 
	is composed
	of path loss and 3D antenna gain and is given by \cite{kammoun2018design,li2015sum}
	\begin{align}\label{3dgain}
		\beta_{f_i,n_i,k}=d_{f_i,k}^{-\varrho}10^{\frac{G_{f_i,n_i,k}}{10}},
	\end{align}
	where
	 $d_{f_i,k}^{-\varrho}$ expresses the path loss, $d_{f_i,k}$ is the distance between BS $f_i$ and user $k$. Moreover, $\varrho$ is the path loss exponent and $G_{f_i,n_i,k}$ represents the gain of antenna and is given by \eqref{3D gain}.
	\subsection{{Radio Resource Allocation }}
	 In this section, we formulate the radio resource allocation problem based on the pricing model. To this end,
	we define the subcarrier assignment and BS selection variable, i.e., $\rho_{f_i,n_i,k}\in\{0,1\}$ with $\rho_{f_i,n_i,k}=1$, if user $k$ is scheduled to receive information from BS $f_i$ over subcarrier $n_i$, and otherwise  $\rho_{f_i,n_i,k}=0$.
	 We assume that the power of information symbol of user $k$ from BS $f_i$ over subcarrier $n_i$  is denoted by  ${s}_{f_i,n_i,k}$ is normalized to one, i.e., $\mathbb{E}(|s_{f_i,n_i,k}|^2)=1$. 
	  Thus, the signal transmitted by BS $f_i$ to user $k$ on subcarrier $n_i$ is given by
	\begin{align}\label{transmitedsignalbsbiuserkv}
	\textbf{x}_{f_i,n_i,k}= \rho_{f_i,n_i,k} \textbf{w}_{f_i,n_i,k}{s}_{f_i,n_i,k}.
	\end{align}	
	Note that each user should be assigned to at most one BS. This assumption can be taken into account by the following two constraints:
	\begin{align}\label{constraintoneuseronetransmitereachusereachsubcarrier1}\nonumber
	& \rho_{f_i,n_i,k}+\rho_{f_j,n_j,k}\le 1,\forall i\neq j, i,j \in \mathcal{I},\\&n_i \in \mathcal{N}_i, n_j \in \mathcal{N}_j,f_i,f_j\in\mathcal{F},\forall{k}\in\mathcal{K}_{\text{Total}},\\\label{constraintoneuseronetransmitereachusereachsubcarrier2}\nonumber& \rho_{f_i,n_i,k}+\rho_{f'_i,n'_i,k}\le 1, \forall i \in \mathcal{I},\,\\&\forall f_i \neq f'_i,f_i,f'_i\in \mathcal{F}_i,n_i,n'_i \in \mathcal{N}_i,\forall{k}\in\mathcal{K}_{\text{Total}}.
	\end{align}
	Constraint \eqref{constraintoneuseronetransmitereachusereachsubcarrier1} ensures that each user can be associated to at most one InP's network and constraint \eqref{constraintoneuseronetransmitereachusereachsubcarrier2} guarantees that each user can be assigned to at most one BS that is associated InP.
	For the subcarrier assignment, based on the NOMA approach in each cell, each subcarrier can be assigned to at most $L$ users.	That means, at most $L$ users could be scheduled for each subcarrier based on the following constraint:
	\begin{align}\label{constraintsubcarrierreuse1}
	&\sum_{k\in\mathcal{K}} \rho_{f_i,n_i,k}\leq L, \forall n_i\in\mathcal{N}_i,f_i\in\mathcal{F}_i,i\in\mathcal{I}.
	\end{align}
	Based on these definitions, the received signal of user $k$ which is assigned to BS $f_i$ for InP $i\in\mathcal{I}$ over subcarrier $n_i$ is given by \eqref{receivedsignalkv},
	\begin{figure*}[t]
	\begin{align}\label{receivedsignalkv}
&y_{f_i,n_i,k}=
	\underbrace{\textbf{h}^{\dagger}_{f_i,n_i,k} \textbf{x}_{f_i,n_i,k}}_\text{a}
	+\underbrace{\sum_{k'\in \mathcal{K}_{\text{Total}}\backslash k, k'\ge k}\textbf{h}^{\dagger}_{f_i,n_i,k'} \textbf{x}_{f_i,n_i,k'}}_\text{b}
	+\underbrace{\sum_{f'_i\in\mathcal{F}_i,f'_i\neq f_i }\sum_{k''\in\mathcal{K}_{\text{Total}}}\textbf{h}^{\dagger}_{f'_i,n_i,k} \textbf{x}_{f'_i,n_i,k''}}_\text{c}+\sigma^{2}_{f_i,n_i,k},
	\end{align}
	\hrule
\end{figure*}	
	where (a) is the desired signal and term (b) comes from the NOMA technique which is the interference from users with higher order in SIC ordering
	and (c) comes from inter-cell interference, respectively. In addition, for the sake of notational simplicity, we use  $k'\ge k$ to indicate that in SIC ordering, user $k'$ has higher order than that of user $k$, i.e., $\mid\textbf{h}^{\dagger}_{f_{i},n_i,k'}\textbf{w}_{f_{i},n_i,k'}\mid^2\geq
	\mid\textbf{h}^{\dagger}_{f_{i},n_i,k}\textbf{w}_{f_{i},n_i,k}\mid^2$, in the rest of this paper. Moreover, $\sigma^{2}_{f_i,n_i,k}$ is the power of
	additive white Gaussian noise (AWGN). 
	Using these definitions, the SINR of user $k$ is obtained as follows
	\begin{align}\label{sinrbikv}
	\gamma_{f_i,n_i,k}=\frac{\rho_{f_{i},n_i,k}\mid\textbf{h}^{\dagger}_{f_i,n_i,k}\textbf{w}_{f_i,n_i,k}\mid^2}{
		\text{Int}_{f_{i},n_i,k}^{\text{NOMA}}+\text{Int}_{f_{i},n_i,k}^{\text{Cell}}+\sigma^{2}_{f_i,n_i,k}},
		\end{align}
	where $\text{Int}_{f_{i},n_i,k}^{\text{NOMA}}$ and $\text{Int}_{f_{i},n_i,k}^{\text{Cell}}$ are \textcolor{black}{the intra-cell (NOMA) and the inter-cell interference and are given by} 
	\begin{align}&
	\text{Int}_{f_{i},n_i,k}^{\text{NOMA}}=\sum_{k'\in\mathcal{K}_{\text{Total}}\backslash k,\,k'\ge k} \rho_{f_{i},n_i,k'} 
	\mid\textbf{h}^{\dagger}_{f_{i},n_i,k}\textbf{w}_{f_{i},n_i,k'}\mid^2,	\label{nomainter}
	\\
	&\text{Int}_{f_{i},n_i,k}^{\text{Cell}}=
	\sum_{f'_{i}\in\mathcal{F}_{i}\backslash f_i}\sum_{k''\in\mathcal{K}_{\text{Total}}} \rho_{f'_{i},n_i,k''} \mid\textbf{h}^{\dagger}_{f'_{i},n_i,k}\textbf{w}_{f'_{i},n_i,k''}\mid^2,\label{intercell}
	\end{align}
\textcolor{black}{	Therefore, the achievable rate of user $k$ associated with BS $f_i$ on subcarrier $n_i$ is given by 
	\begin{align}\label{ratebikv}
\nonumber r_{f_i,n_i,k}=&\sum_{n_i\in\mathcal{N}_i}\log(1+\gamma_{f_i,n_i,k}),~ \forall f_i\in\mathcal{F}_i, i\in\mathcal{I}, \\&k\in\mathcal{K}_{\text{Total}}, n_i\in\mathcal{N}_i.
	\end{align}
	Hence,  the total rate of user $k$ associated with BS $f_i$ is $r_{f_i,k}=\sum_{n_i\in\mathcal{N}_i}r_{f_i,n_i,k}$.}
	\subsection{{Objective Function and Optimization Problem Formulation}}
 In order to include the cost in the objective function, we assume that each InP charges the MVNOs which are using its infrastructures based on the amount of the resources consumed by the users of these MVNOs. We also assume, the amount of charge of each MVNO $v\in\mathcal{V}$ is proportional to the amount of transmit power consumed by its users, i.e.,
	\begin{align}\label{chargeMVNOv}
	C^\text{Total}_v=&\sum_{i\in\mathcal{I}}\sum_{f_{i}\in\mathcal{F}_{i}}\sum_{n_i\in\mathcal{N}_i}\sum_{k\in\mathcal{K}_{v}} \rho_{f_{i},n_i,k}C_{i,f_{i}}\|\textbf{w}_{f_{i},n_i,k}\|^2,\,
	\end{align}
	where $C_{i,f_{i}}$ is the amount of charge per unit of power which should be paid when connecting to BS $f_i$ in InP $i$.	
	Furthermore, each MVNO charges its users based on the data rate provided to it, and hence, the income of MVNO is given by
	\begin{align}\label{incomeMVNOv}
	G^\text{Total}_v=\sum_{i\in\mathcal{I}}\sum_{f_{i}\in\mathcal{F}_{i}}\sum_{k\in\mathcal{K}_{v}}G_{v,k}r_{f_i,k},\,\forall v\in\mathcal{V},
	\end{align}
	where $G_{v,k}$ is the amount of charge per unit of data rate which should be paid to MVNO $v$ by user $k$. Hence, we formulate the revenue of MVNO $v$ as follows 
	\begin{align}
	\label{revenueMVNOv}
	U^\text{Total}_v=G^\text{Total}_v-C^\text{Total}_v.
	\end{align}
	Based on these assumptions and definitions, our proposed optimization problem is maximizing the total revenue of the MVNOs under 3DBF, BS, and subcarrier allocation variables and network constraints is stated as follows:
		\begin{subequations}\label{opt0}
			\begin{align}
		&	\max_{\bold {W}, \boldsymbol{\rho},\boldsymbol{\theta}} O(\bold {W},\boldsymbol{\rho},\boldsymbol{\theta})=\sum_{v\in\mathcal{V}}U^\text{Total}_v
			\\
		\nonumber	& \bold{s.t.}\,\,\\\label{opt2}&G^\text{Total}_v \geq C^\text{Total}_v,\forall v\in\mathcal{V},
			\\\nonumber\label{opt3}&
			\gamma_{f_i,n_i,k}\ge
			\gamma_{f_i,n_i,k'},
			\\&
			\forall k\ge k',\, \forall k\in\mathcal{K}_{\text{Total}},\forall i\in\mathcal{I},f_i\in\mathcal{F}_i,\forall n_i\in\mathcal{N}_i,
			\\\label{opt4}\nonumber
			& \sum_{k\in\mathcal{K}_{\text{Total}}}\sum_{n_i\in \mathcal{N}_i}\rho_{f_i,n_i,k}\|\bold {w}_{f_i,n_i,k}\|^2\le P_{\max}^{\text{FBS}},
			\\&\forall i\in\mathcal{I},\, f_i\in\mathcal{F}_i,\ f_i\neq 0,
			\\\label{opt5}\nonumber
			& \sum_{k\in\mathcal{K}_{\text{Total}}}\sum_{n_i\in\mathcal{N}_i}\rho_{f_i,n_i,k}\|\bold {w}_{f_i,n_i,k}\|^2\le P_{\max}^{\text{MBS}},
			\\
	&	\forall i\in \mathcal{I},\,f_i=0,
			\\\label{opt6}
			&\sum_{i\in\mathcal{I}}\sum_{f_i\in\mathcal{F}_i}\sum_{k\in\mathcal{K}_v}\sum_{n_i\in \mathcal{N}_i}r_{f_i,n_i,k}\ge R_{\min}^{v},\,\forall v\in\mathcal{V},
			\\&\label{opttild}\nonumber
		\theta_{f_i,n_i,k}^{\text{tilt},\min}\le	\theta_{f_i,n_i,k}^{\text{tilt}}\le\theta_{f_i,n_i,k}^{\text{tilt},\max},\,\\&\forall i\in\mathcal{I},f_i\in\mathcal{F}_i,\forall n_i\in\mathcal{N}_i,\forall k\in\mathcal{K}_{\text{Total}},		 
			\\&\label{opt7}\nonumber
			\rho_{f_i,n_i,k}+\rho_{f_j,n_j,k}\le 1,\forall i\neq j, i,j \in \mathcal{I},f_i\in\mathcal{F}_i,\\&f_j\in\mathcal{F}_j,n_i\in \mathcal{N}_i,n_j \in \mathcal{N}_j,\forall k\in\mathcal{K}_{\text{Total}},
			\\\label{opt8}\nonumber& 
			\rho_{f_i,n_i,k}+\rho_{f'_i,n'_i,k}\le 1,
			\forall f_i \neq f'_i, i \in \mathcal{I},\\&f_i,f'_i\in \mathcal{F}_i,n_i,n'_i \in \mathcal{N}_i,\forall k\in\mathcal{K}_{\text{Total}},
			\\ \label{opt9} &
			\sum_{k\in\mathcal{K}_{\text{Total}}} \rho_{f_i,n_i,k}\leq L, \forall n_i\in\mathcal{N}_i,f_i\in\mathcal{F}_i,i\in\mathcal{I},
			\\ \label {opt10}
			&
			\rho_{f_{i},n_i,k}\in \{0,1\},
			\forall n_i\in\mathcal{N}_i,f_i\in\mathcal{F}_i,i\in\mathcal{I}, k \in\mathcal{K_{\text{Total}}},
			\end{align}
		\end{subequations}
	where $\bold{W} = [\bold{w}_{f_i,n_i,k}]$, $\boldsymbol{\rho}=[\rho_{f_{i},n_i,k}]$, and $\boldsymbol{\theta}=[\theta_{f_i,n_i,k}^{\text{tilt}}]$. 
	 The constraints \eqref{opt2} and  \eqref{opt3} ensure that each MVNO has a benefit and SIC is performed correctly\footnote{This constraint guarantees that user $k$ is able to correctly decode the signal
	 	of other users on the same subcarrier (\underline{successful} interference
	 	cancellation) \cite{8-8115155}, \cite{7812683}.}, respectively. The constraints \eqref{opt4} and \eqref{opt5} show
	the available transmit power budget at each MBS and FBS, respectively. The constraint \eqref{opt6} guarantees the	 minimum data rate requirement of each MVNO from QoS perspective. The constraint \eqref{opttild} is used for the upper and lower bounds of 3DBF tilt.
	 The constraints  \eqref{opt7}-\eqref{opt10} are BS selection and subcarrier assignment limitations in which  
	\eqref{opt7} and \eqref{opt8}  indicate each user is only assigned to one BS and \eqref{opt9} demonstrates that each subcarrier can be allocated to at most $L$ users, simultaneously.
	\section{{Solution of Optimization Problem}}\label{solutions}
	The proposed optimization problem \eqref{opt0} is a non-linear programming problem incorporating both integer and continuous variables. Moreover, due to the non-concavity of the objective function and constraints \eqref{opt2}, \eqref{opt3}, and \eqref{opt6}, it
	is not-convex and intractable. Hence, the well-known convex optimization methods cannot be used, directly. Hence, we propose a new solution method, namely, JS-CIV, 
which transforms the original non-convex problem into a convex one. To this end, firstly, we introduce new variables, then merge the integer variable with continues variable,  and exploit the SCA technique. By these transformations, the well-known traditional sub-optimal solution, i.e., ASM, that is widely applied for  solving non-convex and NP-hard problems is not required to solve problem \eqref{opt0}. \textcolor{black}{The main two steps of JS-CIV are in the following.}
\textcolor{black}{	\subsection{{ Step-one (\underline{Converting and introducing  auxiliary variables})}}
	 We merge the integer variable (subcarrier and user association), i.e., $\rho_{f_{i},n_i,k}$ with the 3D beam weighted variable, i.e., $\bold {w}_{f_i,n_i,k}$ as follows:\\
	Clearly, from \eqref{opt7}, \eqref{opt8}, and \eqref{opt10}, we obtain that, if $\rho_{f_i,n_i,k}=1$, then $\rho_{f'_i,n'_i,k}=0$ and 
	$\rho_{f_j,n_j,k}=0, \forall i,j\in{\mathcal{I}},i\neq j,\,f_i\neq f'_i\in\mathcal{F}_i,n_i\in\mathcal{N}_i,k\in\mathcal{K}_{\text{Total}}$. 
	Therefore, if $\bold {w}_{f_i,n_i,k}\neq 	\bold {0}_{M_T}$ (\underline{All zero vector with size $M_T$}),  then $\bold {w}_{f'_i,n'_i,k}=	\bold {0}$ and 
	$\bold {w}_{f_j,n_j,k}$. 
	Due to the fact that, if subcarrier $n_i$ is not allocated to user $k$, then power and beam values on subcarrier $n_i$, i.e., $\bold{w}_{f_i,n_i,k}$ is zero.}
		 	\textcolor{black}{	\begin{pro}
			Assume that $\bold{w}_{1}$ and $\bold{w}_2$ are vectors with complex values and same sizes. Then, we have
			\begin{align}
			\bold{w}_{1}\odot \bold{w}_{2}=\bold{0} \Leftrightarrow \bold{w}_{1}=\bold{0} ~~\text{or}~~ \bold{w}_{2}=\bold{0}.
			\end{align} 
		\end{pro}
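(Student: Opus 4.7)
The plan is to handle the two directions of the biconditional separately, since one is essentially trivial and all the content lies in the other. First I would dispose of the reverse implication ($\Leftarrow$): if $\bold{w}_1 = \bold{0}$ or $\bold{w}_2 = \bold{0}$, then for every component $m$ the product $w_1^{(m)} w_2^{(m)}$ has a zero factor, so $\bold{w}_1 \odot \bold{w}_2 = \bold{0}$ directly from the definition of the Hadamard product. No additional machinery is needed here.

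For the forward direction ($\Rightarrow$) I would argue componentwise. Writing $\bold{w}_j = (w_j^{(1)}, \ldots, w_j^{(M)})^T$ for $j=1,2$, the hypothesis $\bold{w}_1 \odot \bold{w}_2 = \bold{0}$ is equivalent to $w_1^{(m)} w_2^{(m)} = 0$ for every $m \in \{1,\ldots,M\}$. Because $\mathbb{C}$ is an integral domain, this forces, at each index $m$, either $w_1^{(m)} = 0$ or $w_2^{(m)} = 0$. Hence the supports of $\bold{w}_1$ and $\bold{w}_2$ are disjoint subsets of $\{1,\ldots,M\}$.

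The main obstacle, and it is a genuine one, is that this disjoint-support conclusion is strictly weaker than the desired ``$\bold{w}_1 = \bold{0}$ or $\bold{w}_2 = \bold{0}$'': the counterexample $\bold{w}_1 = (1,0)^T$, $\bold{w}_2 = (0,1)^T$ already produces $\bold{w}_1\odot\bold{w}_2=\bold{0}$ without either factor vanishing. To close the argument as the proposition claims, I would invoke the structural convention that is implicit in the surrounding Step-One discussion: each beam vector $\bold{w}_{f_i,n_i,k}$ encodes a single subcarrier--user--BS activation and is therefore either the all-zero vector (inactive) or has uniform nonzero support across all $M_T$ antennas (active, since no individual antenna weight is forced to vanish when the corresponding $\rho_{f_i,n_i,k}=1$). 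Under this convention, two vectors with disjoint supports cannot both be ``active'', so at least one must have empty support and hence equal $\bold{0}$, completing the biconditional. In my view the critical step is to state this activation convention as an explicit hypothesis of the proposition; without it the claim fails on the counterexample above, and with it the proof becomes the two-line componentwise argument sketched here.
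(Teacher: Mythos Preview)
Your analysis is more careful than the paper's own proof. The paper argues only the scalar fact $z_1 z_2 = 0 \Rightarrow z_1 = 0$ or $z_2 = 0$ (via contradiction, dividing by a nonzero $z_2$) and then asserts that ``this can extend to all elements of $\bold{w}_1 \odot \bold{w}_2 = \bold{0}$'' without further comment. As you correctly observe, the componentwise extension yields only that the supports of $\bold{w}_1$ and $\bold{w}_2$ are disjoint, and your counterexample $\bold{w}_1 = (1,0)^T$, $\bold{w}_2 = (0,1)^T$ shows the stated biconditional is false for general complex vectors of length greater than one.

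So the gap you identify is real, and it is present in the paper's proof as well; the paper simply does not notice it. Your proposed remedy---importing the surrounding ``activation'' convention (each beam vector is either identically zero or has full support across all $M_T$ antennas) as an explicit hypothesis---is exactly the right repair for the intended application: under that convention disjoint supports do force one vector to vanish, and the argument closes. The paper neither states this hypothesis in the proposition nor addresses the counterexample, so your version is the mathematically more complete one, even though both proofs share the same elementary componentwise skeleton.
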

	\begin{proof}
		To prove this, 
		we consider $z_1$ and $z_2$ as the elements of  $\bold{w}_{1}$ and $\bold{w}_{2}$, respectively. Meanwhile, we easily can demonstrate  that $z_1.z_2=0\Rightarrow z_1=0~\text{or}~ z_2=0.$ One way to prove is to  consider that there exist $z_1\neq 0$ and $z_2\neq 0$ such that $z_1.z_2=0$. We infer that $z_1=\frac{0}{z_2}=0$, is contradiction. This can extend to all elements of $\bold{w}_{1}\odot \bold{w}_{2}=\bold{0}$.
	\end{proof}
	According to the above explanations,
	the user association constraints \eqref{opt7} and \eqref{opt8} can be replaced by the following continues constraints, equivalently:
	\begin{align}\label{bsass}\nonumber&
	\bold {w}_{f_i,n_i,k}\odot\bold {w}_{f'_i,n'_i,k}=\bold{0},\\& \forall i\in\mathcal{I}, f_i\neq f'_i\in\mathcal{F}_i,\,n_i,n'_i\in\mathcal{N}_i,k\in\mathcal{K}_{\text{Total}},
	\\\nonumber&
	\bold {w}_{f_i,n_i,k}\odot\bold {w}_{f_j,n_j,k}=\bold{0}, 
	\forall i\neq j, i,j \in \mathcal{I},f_i\in\mathcal{F}_i,\\&f_j\in\mathcal{F}_j,\forall n_i\in \mathcal{N}_i,n_j \in \mathcal{N}_j,\forall k\in\mathcal{K}_{\text{Total}}.\label{bsass1}
	\end{align}
	Similarly, based on constraints \eqref{opt9} and \eqref{opt10} and assumption of $L=2$,
	 if $\rho_{f_{i},n_i,k}=1$ and $\rho_{f_i,n_i,c}=1$, then, $\rho_{f_i,n_i,q}=0, \forall k,c,q\in\mathcal{K}_{\text{Total}},\,q\neq k, q \neq c $. Thus, if $\bold {w}_{f_i,n_i,k}\neq 	\bold {0}$ and $\bold {w}_{f_i,n_i,c}\neq	\bold {0}$, then, 
	$\bold {w}_{f_i,n_i,q}=	\bold {0}$.
	 Therefore, the subcarrier assignment constraints \eqref{opt9} and \eqref{opt10}, equivalently, can be replaced as follows:}
	\begin{align}\label{subass}
\nonumber&	\big(\bold {w}_{f_i,n_i,k}\odot\bold {w}_{f_i,n_i,c}\big)\odot\bold {w}_{f_i,n_i,q}=\bold{0}, 
	\\&
	\forall i\in\mathcal{I}, f_i\in\mathcal{F}_i,\,n_i\in\mathcal{N}_i,k, c, q\in\mathcal{K}_{\text{Total}}, k\neq c\neq q.
	\end{align}
\textcolor{black}{	\begin{remark}
	Note that $L=2$, is more practical and attracted a lot of attentions in both industry \cite{3gppp} and academia \cite{8786250,7273963}, \cite{7812683}, \cite{3-7362734}. However, we can generalize it to $L=3,4,\dots$ with accepting high complexity by following the same line of the proposed methods. For example, if $L=3$, we have $\Big(\big(\bold {w}_{f_i,n_i,k}\odot\bold {w}_{f_i,n_i,c}\big)\odot\bold {w}_{f_i,n_i,q}\Big)\odot {w}_{f_i,n_i,g}=\bold{0}.$
\end{remark}}
	By the proposed transformation, the integer variable is rewritten as a continues variable by considering its related new constraints.
\textcolor{black}{Moreover, to make  problem \eqref{opt0} more tractable, we introduce a new slack variable as $\boldsymbol{\lambda} = [\lambda_{f_i,n_i,k}]$. By this, the rate function \eqref{ratebikv} can be rewritten as $r_{f_i,n_i,k}=\log(\lambda_{f_i,n_i,k})$. To keep SINR is positive, we should consider that $\lambda_{f_i,n_i,k}\ge 0$.	
	Therefore, the original problem \eqref{opt0} can be \textcolor{black}{ reformulated} as follows \cite{7070667}:
		\begin{subequations}\label{opt20}
			\begin{align}\label{opt21}
&		\max_{\bold {W},\boldsymbol{\theta},\boldsymbol{\lambda}}
	O(\bold {W},\boldsymbol{\theta},\boldsymbol{\lambda})=\sum_{v\in\mathcal{V}} G^\text{Total}_v-C^\text{Total}_v,
		\\\nonumber
&	\bold{s.t.}\\\label{opt22}	&
		G^\text{Total}_v \geq C^\text{Total}_v,\forall v\in\mathcal{V},
		\\&\label{opt23}
		\lambda_{f_i,n_i,k}\ge 0,
		\forall i\in\mathcal{I},\forall f_i\in\mathcal{F}_i,\,\forall n_i\in\mathcal{N}_i, \forall k\in\mathcal{K}_{\text{Total}},
		\\&
		\label{opt24}
			1+\gamma_{f_i,n_i,k}\ge \lambda_{f_i,n_i,k},
		 \forall f_i\in\mathcal{F}_i, \forall n_i\in\mathcal{N}_i, \forall k\in\mathcal{K}_{\text{Total}},
		\\&\label{opt25}\nonumber
	\lambda_{f_i,n_i,k}\ge \lambda_{f_i,n_i,k'},
	\\&
	\forall k\ge k',\, \forall k\in\mathcal{K}_{\text{Total}},\forall i\in\mathcal{I},f_i\in\mathcal{F}_i,\forall n_i\in\mathcal{N}_i,
	\\\label{opt26}
	&
	 \sum_{k\in\mathcal{K}_{\text{Total}}}\sum_{n_i\in \mathcal{N}_i}\|\bold {w}_{f_i,n_i,k}\|^2\le P_{\max}^{\text{FBS}},
	\forall i\in\mathcal{I},\, f_i\in\mathcal{F}_i,\ f_i\neq 0,
	\\\label{opt27}
	& \sum_{k\in\mathcal{K}_{\text{Total}}}\sum_{n_i\in\mathcal{N}_i}\|\bold {w}_{f_i,n_i,k}\|^2\le P_{\max}^{\text{MBS}},
	\forall i\in \mathcal{I},\,f_i=0,
	\\\label{opt28}
	&\sum_{i\in\mathcal{I}}\sum_{f_i\in\mathcal{F}_i}\sum_{k\in\mathcal{K}_v}\sum_{n_i\in \mathcal{N}_i}\log(\lambda_{f_{i},n_i,k})\ge R_{\min}^{v},\, \forall v\in\mathcal{V},
	\\&\nonumber 
	\eqref{bsass},\,\eqref{bsass1},\eqref{subass},\eqref{opttild},
		\end{align}
	\end{subequations}
	where $\bold{W} = [\bold{w}_{f_i,n_i,k}]$, $\boldsymbol{\lambda}=[\lambda_{f_{i},n_i,k}]$, and $\boldsymbol{\theta}=[\theta_{f_i,n_i,k}^{\text{tilt}}]$.
The optimization problem \eqref{opt20} is still non-convex, due to constraints \eqref{opt24} and \eqref{bsass}-\eqref{subass} are non-convex. We convert it into convex one as explained next.
\subsection{Step-two (\underline{\textit{Approximation and changing variables}})}
By substituting \eqref{sinrbikv} into constraint \eqref{opt24} and some mathematical
manipulations, we can rewrite \eqref{opt24}  by \eqref{sinr}.
	\begin{figure*}[t]
	\begin{align}\label{sinr}
	&\frac{\mid\textbf{h}^{\dagger}_{f_{i},n_i,k}\textbf{w}_{f_{i},n_i,k}\mid^2}{\big(\lambda_{f_i,n_i,k}-1\big)}\ge
	\sum_{k'\in\mathcal{K}_{\text{Total}}\backslash k,k'\succ k} \mid\textbf{h}^{\dagger}_{f_{i},n_i,k}\textbf{w}_{f_{i},n_i,k'}\mid^2 
	+\sum_{f'_{i}\in\mathcal{F}_{i}\backslash f_i}\sum_{k'\in\mathcal{K}_{\text{Total}}\backslash k'} \mid\textbf{h}^{\dagger}_{f'_{i},n_i,k}\textbf{w}_{f'_{i},n_i,k'}\mid^2+{\sigma_{f_i,n_i,k}^{2}},
	\end{align}\hrule
	\end{figure*}}
	In order to convert \eqref{sinr} to a linear constraint,  \textcolor{black}{we adopt SCA with first order Taylor series expansion. 
	To this end, we define a
	new proxy function $g(\Phi_{f_{i},n_i,k})$ by dividing the product $\textbf{h}^{\dagger}_{f_{i},n_i,k}\textbf{w}_{f_{i},n_i,k}$ \textcolor{black}{into the}  real and imaginary parts as follows \cite{al2019energy1}:
 \begin{align}\nonumber
 g(\Phi_{f_{i},n_i,k})\triangleq&\mid\textbf{h}^{\dagger}_{f_{i},n_i,k}\textbf{w}_{f_{i},n_i,k}\mid^2=\\&\|[\Re(\textbf{h}^{\dagger}_{f_{i},n_i,k}\textbf{w}_{f_{i},n_i,k})\Im(\textbf{h}^{\dagger}_{f_{i},n_i,k}\textbf{w}_{f_{i},n_i,k})]^{T}\|^2,\label{procyfunction}
 \end{align}
 where $\Phi_{f_{i},n_i,k}=[\Re(\textbf{h}^{\dagger}_{f_{i},n_i,k}\textbf{w}_{f_{i},n_i,k})\Im(\textbf{h}^{\dagger}_{f_{i},n_i,k}\textbf{w}_{f_{i},n_i,k})]$. Note that \eqref{procyfunction} can easily be driven by the corresponding definition.
 For the sake of notation simplicity, we also define 
  \begin{align}\label{proxydef}
 g(\Phi_{f,n,k}^{f',n',k'})&\triangleq\mid\textbf{h}^{\dagger}_{f,n,k}\textbf{w}_{f',n',k'}\mid^2.
 \end{align}
 Now by substituting \eqref{proxydef} into \eqref{sinr}, we can rewrite \eqref{sinr} by \eqref{convert_prox}.
 \begin{figure*}
\begin{align}\label{convert_prox}
\frac{g(\Phi_{f_{i},n_i,k}^{(t)})}{\big(\lambda_{f_{i},n_i,k}-1\big)}\ge \sum_{k'\in\mathcal{K}_{\text{Total}}\backslash k,k'\succ k}
g(\Phi_{f_{i},n_i,k}^{f_{i},n_i,k'})+ \sum_{f'_{i}\in\mathcal{F}_{i}\backslash f_i}\sum_{k'\in\mathcal{K}_{\text{Total}}\backslash k'}	g(\Phi_{f_{i},n_i,k}^{f'_{i},n_i,k'})+\sigma^{2}_{f_i,n_i,k},
\end{align}\hrule
\end{figure*}
Both left and right sides of \eqref{convert_prox} are non-convex. To tackle this problem, we apply the first order Taylor series expansion to $g(\Phi_{f_{i},n_i,k})$. With considering two more considerable terms, we can approximate $g(\Phi_{f_{i},n_i,k})$ at  iteration $t$ ($t$ is the iteration number in SCA) by \cite{al2019energy1}
  \begin{align}\label{approxteylor}
 g(\Phi_{f_{i},n_i,k}^{(t)})\approxeq g(\Phi_{f_{i},n_i,k}^{(t-1)})+2(\Phi_{f_{i},n_i,k}^{(t-1)})^{T}\big[\Phi_{f_{i},n_i,k}^{(t)}-\Phi_{f_{i},n_i,k}^{(t-1)}\big].
 \end{align}
  Generalized version  of \eqref{approxteylor} according to the definition of \eqref{proxydef} is given by \eqref{genral_approxteylor}.
  \begin{figure*}
    \begin{align}\label{genral_approxteylor}
  g(\Phi_{f,n,k}^{f',n',k',(t)})\approxeq g(\Phi_{f,n,k}^{f',n',k',(t-1)})+2\big(\Phi_{f,n,k}^{f',n',k',(t-1)}\big)^{T}\big[\Phi_{f,n,k}^{f',n',k',(t)}-\Phi_{f,n,k}^{f',n',k',(t-1)}\big].
  \end{align}
  \hrule
\end{figure*}
   Moreover, we should approximate constraint \eqref{convert_prox} at point $\lambda_{f_{i},n_i,k}$ at iteration $t$ as follows:
 \begin{align}\nonumber
 \label{landaapprox}
\frac{ g(\Phi_{f_{i},n_i,k}^{(t)})}{\big(\lambda_{f_{i},n_i,k}^{(t)}-1\big)}&\approxeq\frac{ g(\Phi_{f_{i},n_i,k}^{(t-1)})}{\big(\lambda_{f_{i},n_i,k}^{(t-1)}-1\big)}\\&-\frac{ g(\Phi_{f_{i},n_i,k}^{(t-1)})}{{(\lambda_{f_{i},n_i,k}^{(t-1)})}^{2}}[\lambda_{f_{i},n_i,k}^{(t)}-\lambda_{f_{i},n_i,k}^{(t-1)}].
 \end{align}
By incorporating and applying these approximations into \eqref{convert_prox}, the convexity of constraint \eqref{opt24} is achievable. Hence, constraint \eqref{opt24}, i.e., \eqref{sinr} (or \eqref{convert_prox}) is replaced with the proposed linear approximation which is given by \eqref{Final_SCA} in which $g(\Phi_{f_{i},n_i,k}^{(t)})$ and   $\boldsymbol{\lambda}^{(t)}=[\lambda_{f_{i},n_i,k}]$ are updated according to \eqref{landaapprox} and \eqref{approxteylor}, respectively.
 \begin{figure*}[t]
	\begin{align}\label{Final_SCA}
\frac{ g(\Phi_{f_{i},n_i,k}^{(t-1)})}{\big(\lambda_{f_{i},n_i,k}^{(t-1)}-1\big)}-	\frac{ g(\Phi_{f_{i},n_i,k}^{(t-1)})}{{(\lambda_{f_{i},n_i,k}^{(t-1)})}^{2}}[\lambda_{f_{i},n_i,k}^{(t)}-\lambda_{f_{i},n_i,k}^{(t-1)}]\ge
		\sum_{k'\in\mathcal{K}_{\text{Total}}\backslash k,k'\succ k}
	\underbrace{g(\Phi_{f_{i},n_i,k}^{f_{i},n_i,k',(t)})}_{\text{Updated by \eqref{genral_approxteylor}}}+ \sum_{f'_{i}\in\mathcal{F}_{i}\backslash f_i}\sum_{k'\in\mathcal{K}_{\text{Total}}\backslash k'}	\underbrace{g(\Phi_{f_{i},n_i,k}^{f'_{i},n_i,k',(t)})}_{\text{Updated by \eqref{genral_approxteylor}}},	
	\end{align}
	\hrule
\end{figure*}}

	\textcolor{black}{For constraints \eqref{bsass}-\eqref{subass},
	we apply the change of variable as defining $\bold{w}_{f_i,n_i,k}\triangleq\exp(\hat{\bold{w}}_{f_i,n_i,k})$. \textcolor{black}{}Then, by replacing the aforementioned changes and approximations, we have the following optimization problem:
	\begin{subequations}\label{opt30}
		\begin{align}\label{opt31}
	&	\max_{\hat{\bold{W}},\boldsymbol{\theta},\boldsymbol{\lambda}}O(\hat{\bold{W}},\boldsymbol{\theta},\boldsymbol{\lambda}),\,\,\,\,\,
		\\\nonumber&\bold{s.t.}\, 
		\\\label{opt32}&
		G^\text{Total}_v
		\geq C^\text{Total}_v
		,\forall v\in\mathcal{V},
		\\\label{opt36}
		&
		 \sum_{k\in\mathcal{K}_{\text{Total}}}\sum_{n\in N}e^{\|\hat{\bold {w}}_{f_i,n_i,k}\|^2}\le e^{P_{\max}^{\text{FBS}}},~
		 \forall i\in\mathcal{I},\, f_i\in\mathcal{F}_i,\ f_i\neq 0,
		\\\label{opt37}& 
		\sum_{k\in\mathcal{K}_{\text{Total}}}\sum_{n_i\in\mathcal{N}_i}e^{\|\hat{\bold {w}}_{f_i,n_i,k}\|^2}\le e^{P_{\max}^{\text{MBS}}},\,\forall i\in \mathcal{I},\,f_i=0,
		\\&\label{opt310}\nonumber
		e^{\hat{\bold {w}}_{f_i,n_i,k}+\hat{\bold {w}}_{f'_i,n'_i,k}}\preceq e^{\bold{0}}=\bold{1},
		\\& \forall f_i,f'_i\in\mathcal{F}_i,\,n_i\in\mathcal{N}_i,n'_i\in\mathcal{N}_i,k\in\mathcal{K}_{\text{Total}}, f_i\neq f'_i,
		\\&\label{opt312}\nonumber
		e^{\big(\hat{\bold {w}}_{f_i,n_i,k}+\hat{\bold {w}}_{f_j,n_j,k}\big)}\preceq e^{\bold{0}}=\bold{1},
		\\& \forall f_i\in\mathcal{F}_i\backslash f'_i,\,n_i\in\mathcal{N}_i,n'_i\in\mathcal{N}_i,k\in\mathcal{K}_{\text{Total}},
		\\\label{opt311}\nonumber&
	e^{	\big(\hat{\bold {w}}_{f_i,n_i,k}+\hat{\bold {w}}_{f_i,n_i,c}+\hat{\bold {w}}_{f_i,n_i,q}\big)}\preceq e^{\bold{0}}=\bold{1}, 
	\\&\forall f_i\in\mathcal{F}_i,\,n_i\in\mathcal{N}_i,k, c, q\in\mathcal{K}_\text{Total}, k\neq c\neq q,
	\\&\nonumber
	\eqref{opttild},\eqref{opt23}, \eqref{opt25}, \eqref{opt28},   \eqref{Final_SCA},	
		\end{align}
	\end{subequations}
	where $\bold{\hat{\bold {W}}} = [\hat{\bold{w}}_{f_i,n_i,k}]$, $\boldsymbol{\lambda}=[\lambda_{f_{i},n_i,k}]$, $\boldsymbol{\theta}=[\theta_{f_i,n_i,k}^{\text{tilt}}].$
The optimization problem \eqref{opt30} is convex one and can be solved by utilizing the MATLAB optimization Toolbox such as CVX,  efficiently. The main steps of solution \eqref{opt30} is stated in Al. \ref{SCA}.} 
		\begin{algorithm}	\DontPrintSemicolon
		\renewcommand{\arraystretch}{0.9}
		\caption{SCA-based solution of problem \eqref{opt30}}
		\label{SCA}
		\KwInput{ Initialize $\boldsymbol{\Phi}^{0}$ and $\boldsymbol{\lambda}^{0}$, set
			$t=0$, 
						$0<\Upsilon<<1$ is tolerance (or accuracy), and $T$ is the allowable number of iterations
		}
		\Repeat{$\vert\vert \hat{\bold{W}}^{t}-\hat{\bold{W}}^{t-1}\vert\vert<\Upsilon$ or $t=T$}
		{
			Compute
						\eqref{approxteylor} and \eqref{landaapprox},	
			
			 Update $\boldsymbol{\lambda}^{t}$, $\boldsymbol{\Phi}^{t}$ (Via CVX)
			
			$t=t+1$
		}
		\KwOutput{ $\hat{\bold{W}},
			 \boldsymbol{\theta}, \boldsymbol{\lambda}$
		}
	\end{algorithm}

\section{{Convergency and Complexity Analysis}}
In this section, the convergence and complexity of iterative solution are stated.
\subsection{{Convergence of SCA} }
The SCA method (Al. \ref{SCA}) produces a sequence of feasible solutions by solving the convex problem in each iteration and the approximate terms are updated in the next iteration and then converges \cite{al2019energy, mokari2016limited}. To ensure convergence of the algorithm, two main conditions should be satisfied. First, initial setting of the network should be in the feasible set, i.e., satisfy the optimization problem constraints. Second, the objective function should be improved in each iteration until the predefined convergency condition in Al. \ref{SCA} is held. 
The convergence of solution against the number of iterations that are required to meet the predefined convergence conditions is illustrated in Fig.\ref{scacon}. From this figure, we can see that after
some iterations, the objective function is fixed. 
	  	\begin{figure}
	\centering
	\includegraphics[width=.48\textwidth]{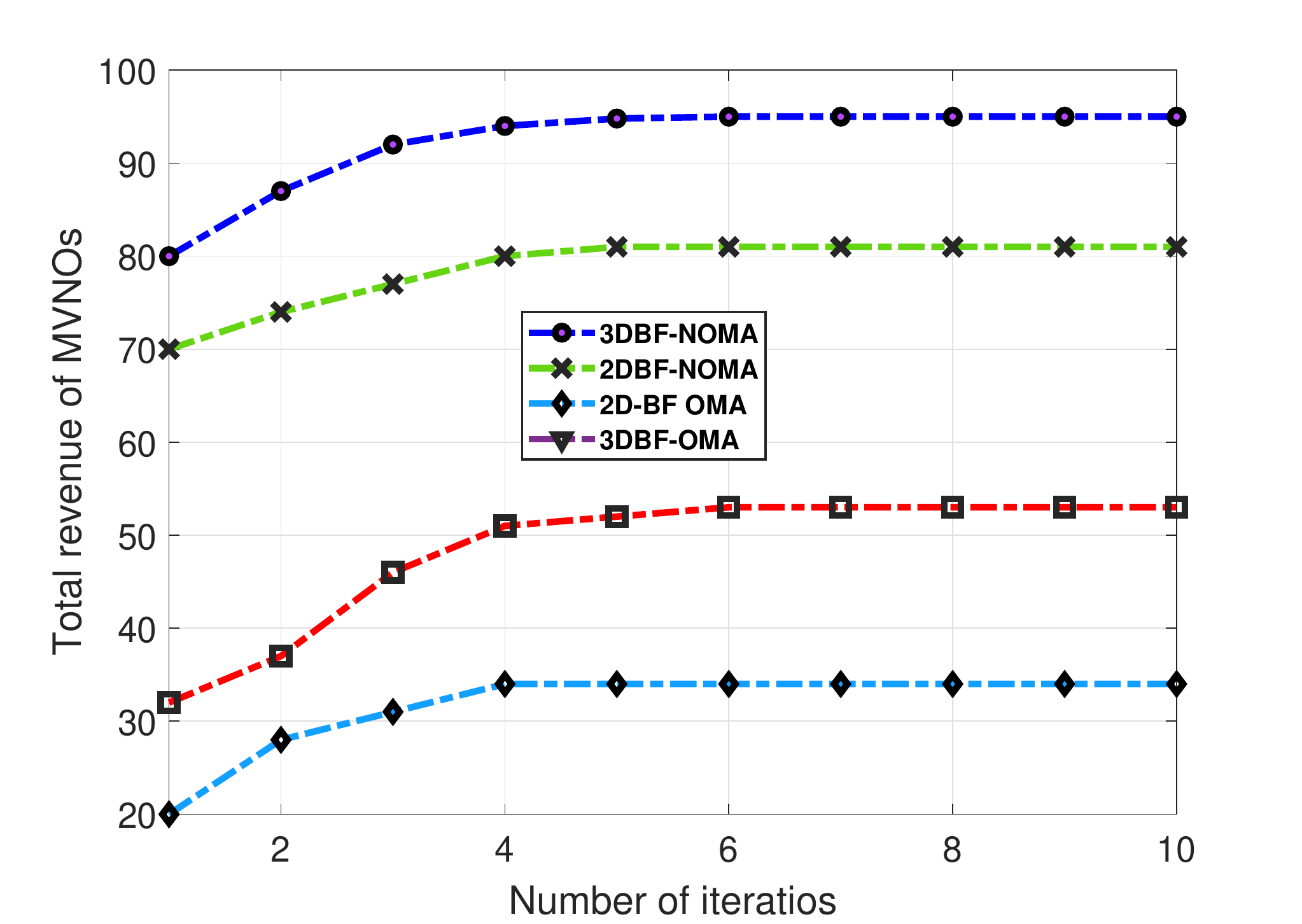}
	\caption{An examples of convergence of SCA algorithm for different network configuration.}
	\label{scacon} 
\end{figure} 
\subsection{{Computational Complexity}}
In this subsection, we analyze   the computational complexity of the proposed solution. 
 The complexity of an iterative algorithm is defined as the total number of iterations that is required  to converge the algorithm. Based on our proposed solution, the main computational complexity comes from solving the problem via CVX solver by applying geometric programming and interior point method (IPM) \cite{alavi2017robust}, \cite{mokari2016limited}. Based on this method, the overall computation complexity is given by 
 \begin{align}
 \varpi =\frac{\log\left(\frac{\Omega}{t^0\zeta}\right)}{\log(\vartheta)},
 \end{align} 
  where 
  \begin{align}\label{nom_cons}
    \Omega=7\times F\times K\times N+2\times V+ F,
  \end{align}
 is the total number of constraints of problem \eqref{opt30}, $t^0$ is the initial point for approximating the accuracy of IPM, $0 <\zeta<< 1$ is the stopping criterion for IPM and $\vartheta$ is used for updating the accuracy
of IPM \cite{boyd2004convex}, \cite{grant2008cvx}, \cite{8686217}. 
	\section{{Numerical Results}}\label{simulations}
	\subsection{{Simulation Environment}}
	In this section, the system performance of the proposed solution is evaluated under different aspects such as the
	 number of users and FBSs with Monte-Carlo simulations with $300$ iterations. In the numerical results, the system parameters are set as follows:\\
	\textcolor{black}{We assume two different InPs, each of which consists
	of a single MBS and $4$ FBSs randomly located in the main area with radius $0.5$ km and $20$ m, respectively \cite{6678362}. Moreover,
	the bandwidth of each InP is $5$ MHz \cite{7397887}.
 The frequency bandwidth of each subcarrier 
  is assumed to be $200$ KHz. Hence, the total number of subcarriers for each InP is $25$. 
 It is also assumed that there are $50$ single-antenna users are randomly distributed in the coverage area of the network at different distances. Moreover, $\Gamma_{f_i,n_i,k}^m$ is generated according to the complex Gaussian
 	distribution with mean $0$ and variance $1$, i.e., $ \boldsymbol{\Gamma}_{f_i,n_i,k}\thicksim\mathcal{CN}(	\bold {0}_{M_T},\bold{I})$, where $\bold{I}$ is the identity matrix.}
	\begin{table}
		\centering
		\caption{Configuration of network and setup parameters.}
		\label{Sim_Set}
		\begin{tabular}{|c|c|}
			\hline
			\textbf{Parameters(s)}  & \textbf{Value(s)}  \\
			\hline
			$I$ & $2$
			\\
			\hline
			 $ V$ & $5$
			 			\\
			 \hline
			 $F_i$ &    $5$
			\\
			\hline
			$\text{MBS radious}$, $\text{FBS radious}$ &    $500$ m, $20$ m
						\\
						 \hline
			$\text{MBS height}$, $\text{FBS height}$ &    $32~$ m, $2~$ m
			\\
						\hline
			$P_{\max}^{\text{MBS}}$, 	$P_{\max}^{\text{FBS}}$ &    $40$$\,\text{Wattss}$, $2 \,\text{Watts}$
			\\
			\hline
			$\text{BW}_i$, $N_i$ &    $5$~MHz, $25$
			\\
			\hline
			$K_{\text{Total}}$, $K_{v}$ &    $50$, $10$
			\\
			\hline
			$M_T$ &    $\text{min=5}, \text{max}=13$
			\\
			\hline
			$L$   &
			$2$
						\\
			\hline
			$\hat{\phi}_{3dB}$, $\hat{\theta}_{3dB}$, $\hat{\phi}_{0}$   &	$65\degree$, $65\degree$, $90\degree$
			\\ 
						\hline
			$	\theta_{f_i,n_i,k}^{\text{tilt},\min}$, $	\theta_{f_i,n_i,k}^{\text{tilt},\max}$   &	$0$, $\pi/2$
			\\
			\hline
		$G_{\max,\text{E}}$, $\text{SLL}_{\text{Al}}$, $\text{SLL}_{El}$    &	$8~$ dBi, $30$ dB, $30$ dB
		\\
		\hline
								$\text{Power spectral density of AWGN noise}$    &	$-174$ $\frac{\text{dBm}}{\text{Hz}}$
			\\
			\hline
											$\text{Channel coeffient}$    &	$ \text{Rayleigh distribution}$ \\&$ \text{with parameter}$ $1$
			\\
			\hline
				$\varrho$    &	$3$
			\\
			\hline
		\end{tabular}\label{simval}
	\end{table}%

	The power spectral density of the received AWGN noise is also set to $-174$ dBm/Hz.
	Moreover, $G_{\max,\text{E}}$,
		$\text{SLL}_{\text{Az}}$, 
	and $\text{SLL}_{\text{El}}$ 
	are set to $8$ 
	dBi,
	 $30$ dB, and $30$ dB, respectively \cite{nadeem2018elevation},\cite{kammoun2018design}, \cite{kammoun2014preliminary}. 
	In addition, we set $R_{\min}^{v}=30$ bps/Hz. Furthermore, we assume that $C_{i,f_i}=200$, and $G_{v,k}=10^{-6}$.
	  The value of other parameters are summarized in Table \ref{simval}. 
	  \subsection{{Simulation Results}}
	   The simulation analyses are provided from two main aspects; 1) The variants of different network parameters, 2) the different solution algorithms; as follows:
	  \subsubsection{{Various System Parameters}}
	  $\bullet~~${\textit{The Number of Users}}
	
	  Fig. \ref{userno} illustrates the total MVNO's revenue versus the total
	  number of users in the network. 
	  It is clear that, 3DBF with NOMA significantly improves the MVNO's revenue due to increasing achievable data rate of users and   system throughput. Moreover, by increasing the number of users, the MVNO's revenue is increased. That means the  MVNO with more customers obtains more revenues in compared to the others with low the number of customers.
	   Hence, not only advanced technologies like multiple antenna BF and NOMA have major effects on the enhancement of system spectral efficiency (throughput), but also it has considerable impact on the revenue of  operators and the costs of each InP.
	  \\$\bullet~~${\textit{The Number of FBSs}}
	  
	  Fig. \ref{numberofbs} shows the total MVNO's revenue versus the number of FBSs in each InP network by considering the number of users is
	  $35$. Clearly, by increasing the number of BSs, MVNO's revenue is improved. This is due to each MVNO's user has better experience and improves the achievable data rate. Hence, users pay more money to their MVNOs. Moreover, in this case probability of reducing path loss is increased and power consumption and its cost are reduced. 
	  On the other hand, in this scenario, the system throughout are increased. \textcolor{black}{ Hence, the total revenue is directly proportional to the system data rate, also increased.} 
	  	\begin{figure}
	  		\centering
	  		\includegraphics[width=.53\textwidth]{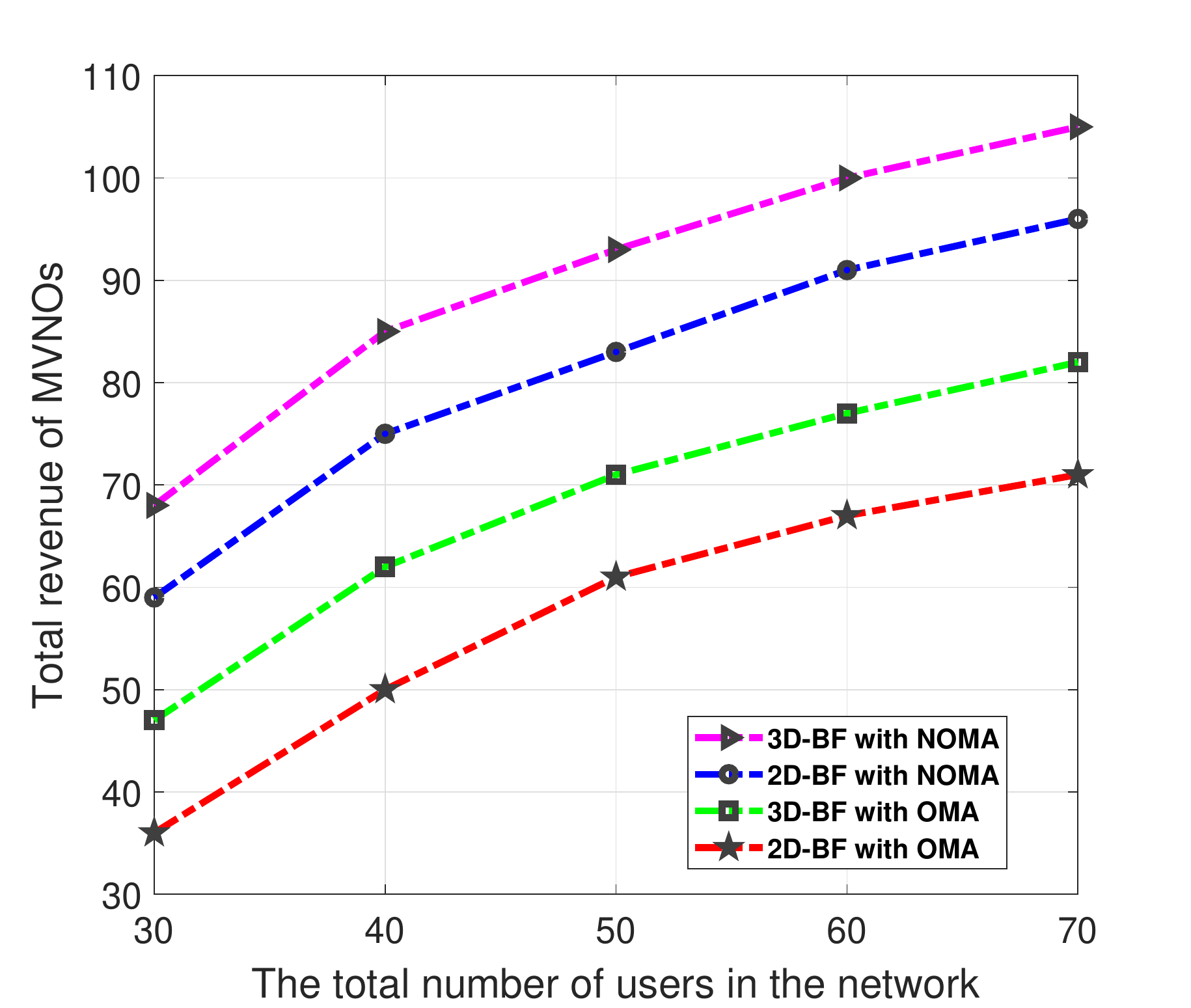}
	  		\caption{Total Revenue of MVNOs versus the total number of users in networks, with $P_{\max}^{\text{MBS}}=40$~Watts,  $P_{\max}^{\text{FBS}}=5$~Watts, $F_i=5$, and $M_T=5$.}
	  		\label{userno} 
	  	\end{figure}  
  	\\$\bullet~~${\textit{The Number of Transmit Antennas in Each BS}}
  		
 The effect of number of transmit antenna  ($M_{T}$) in each BS is illustrated in   Fig. \ref{txantenna}. As seen, by increasing the  number of transmit antenna in each BS, the total revenue of MVNOs is improved. This is due to the effect of antennas on the system throughput and energy efficiency. \textcolor{black}{ More important, in this scenario, 3DBF significantly outperforms 2DBF.} As a result, massive antennas with 3DBF system is a key enabler for massive connection networks in case of spectrum is limit or accusation of it has multiple times cost. Moreover,  3DBF reduces inter-cell interference and enhances the system performance from the SE perspective.
  \begin{figure}
  		\centering
  		\includegraphics[width=.53\textwidth]{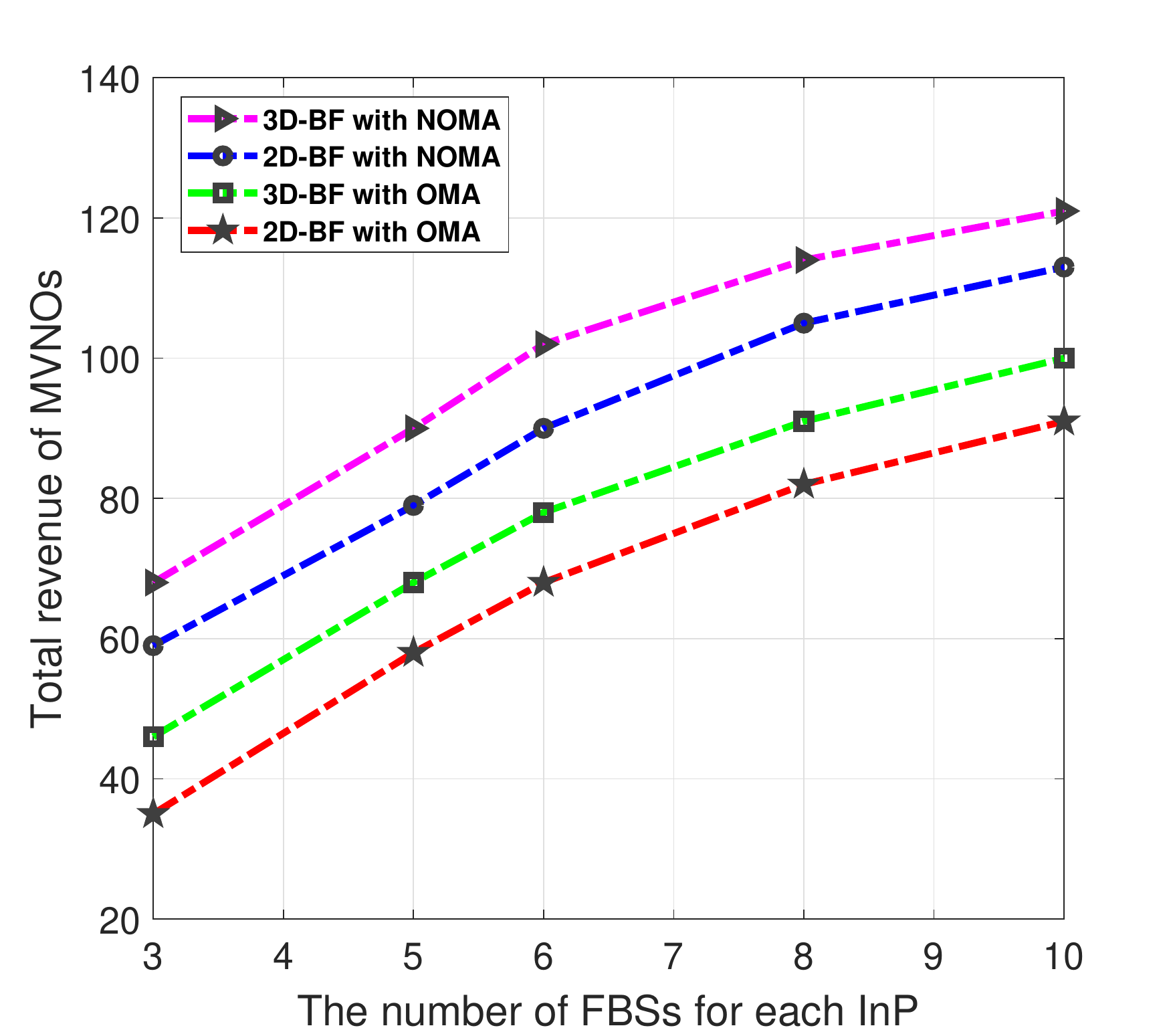}
  		\caption{Total Revenue of MVNOs versus the  total number of FBs in each InP, with $P_{\max}^{\text{MBS}}=40$~Watts,  $P_{\max}^{\text{FBS}}=5$~Watts, $K=50$, and $M_T=5$. }
  		\label{numberofbs} 
  	\end{figure}
\begin{figure}
	\centering
	\includegraphics[width=.52\textwidth]{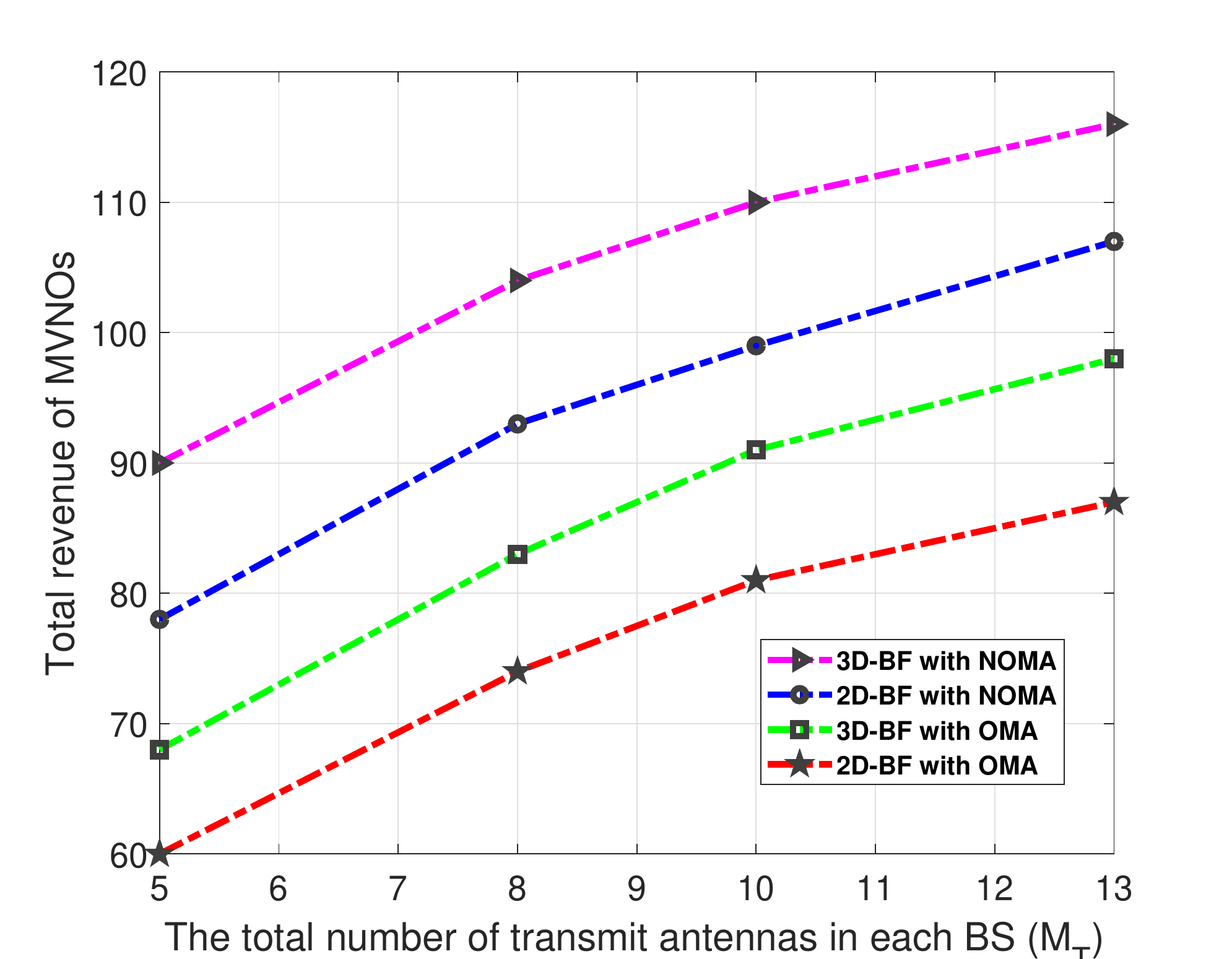}
	\caption{The achieved revenue against the total number of transmit antennas of each BS,  with $P_{\max}^{\text{MBS}}=40$~Watts,  $P_{\max}^{\text{FBS}}=5$~Watts, $K=50$, and $F=5$. }
	\label{txantenna} 
\end{figure}
\\$\bullet~~${\textit{The Total Bandwidth of Each InP }}

In Fig. \ref{subcarrier}, we investigate the total revenue of MVNOs versus maximum available bandwidth of each BS. Note that in this figure, subcarrier spacing is equal to $200$~\text{KHz}. As a result from this analyze, more spectrum for each MVNO makes that it has more revenue. Moreover, in this case NOMA has considerable performance improvement than that of 3DBF. In other words, in case of availability of spectrum, utilizing NOMA has more effects on the revenue compared to enabling 3DBF.   
\begin{figure}
	\centering
	\includegraphics[width=.53\textwidth]{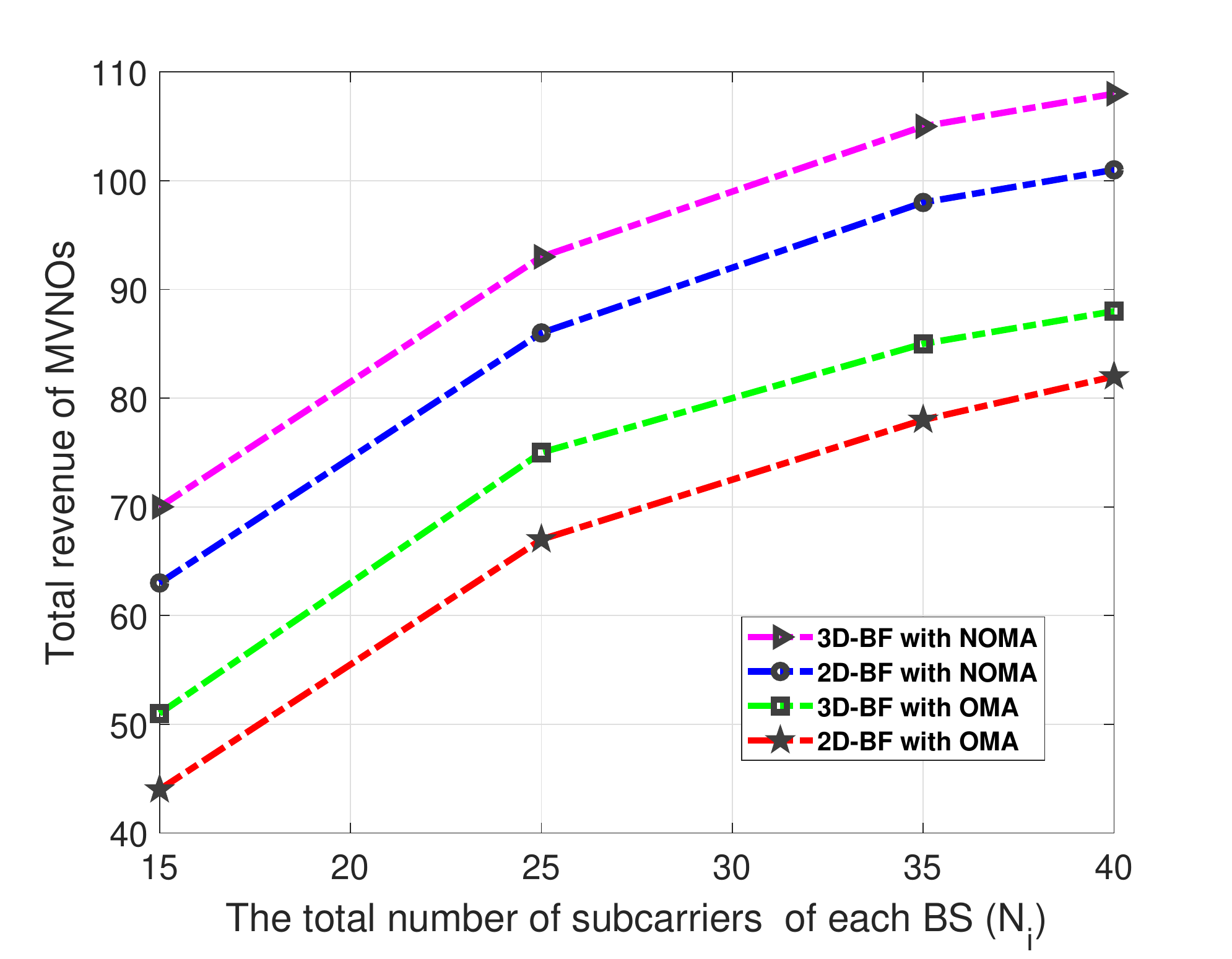}
	\caption{The achieved revenue against the total maximum available bandwidth  of each BS,  with $P_{\max}^{\text{MBS}}=40$~Watts,  $P_{\max}^{\text{FBS}}=5$~Watts, $K=50$, $F=5$, and $M_T=5$. }
	\label{subcarrier} 
\end{figure}
\\$\bullet~~${\textit{The Power Budget and Cost Weight}}

Furthermore, we investigate the maximum allowable power of MBSs impact  on the total revenue of MVNOs with different cost weights, i.e., $C_{i,f_i}$ in Fig. \ref{Power_Eff}. This figure depicts the comparison between different cost weights for power consumption in the InPs network. As a result from this figure, with NOMA and 3DBF the total revenue of MVNOs \textcolor{black}{improve up with approximately $54$\%}  compared to 2DBF with OMA. This is because, not only utilizing NOMA allows that each resource block, i.e., subcarrier can be exploited more \textcolor{black}{than one times without imposing} inter-cell interference, but also 3DBF massive antenna systems has a major improvement on the sum rate of users without more power consumption. \textcolor{black}{ Hence, based on the high income of MVNOs and the cost reduction of power consumption, the total revenue of MVNOs  increases.  
 Moreover, Fig. \ref{Power_Eff} highlights that high cost weights for power consumption decreases the total revenue of MVNOs in contrast to power budget enhancement. }
 
 We emphasize the performance and complexity comparison of the considered techniques  in Table \ref{Com_Order_Tec}, \textcolor{black}{based on the parameters shown in Fig. \ref{userno} as an example of configuration.}  As seen, exploiting 3DBF and NOMA improves the total revenue by approximately    $47$\% compared to 2DBF with OMA. This is because of, in 3DBF NOMA network, the SINR of each user is improved by adjusting the the beam pattern of antennas as desired and with NOMA the system throughput is increased. On the other hand, \textcolor{black}{our proposed revenue is} based on the received data rate of users. Hence, these results are observed. 
 \\$\bullet~~${\textit{Deployment Scenarios }}
 
 Based on the 3-rd  generation partnership project (3GPP) standardization, from geographical locations perspective, we have $4$ main areas as deployment scenarios that are presented in Table \ref{Dep_Sce} \cite{22261}, \cite{NGMN}, \cite{ITU}. Based on this, \textcolor{black}{we to evaluate the proposed problem in these scenarios} to verify its practicality. In this regard, we configure the network according to the characteristics of these scenarios, \textcolor{black}{ in which the considered system parameters are  depicted in Table \ref{BS_con} for a small scaled network. Table \ref{Per_Dep_com} compares the performance of 2DBF and 3DBF for different access technologies and various deployment scenarios in Table \ref{Dep_Sce}. In this Table, we consider 2DBF with OMA access is baseline and comparison values are listed with defined ratio as $\frac{\text{Mentioned technology name}}{\text{2DBF OMA}}$, as an example $\frac{\text{3DBF NOMA}}{\text{2DBF OMA}}$. From this results, we obtain that 3DBF with NOMA is the most beneficial option for Indoor hotspot locations.} This is has two  main reasons, the first is the efficiency of NOMA, since the number of users is high and SE of NOMA is more than \textcolor{black}{that of the other scenarios. The second is refereed to the performance of 3DBF in which it improves the received SINR of users by assigning pencil beams with high gains to them.} Moreover, in these scenarios, the data rate of users that are located in different floors and blind locations are significantly improved.

 \begin{table*}[t]
 	\centering
 	\caption{Performance comparison of different  technologies in different geographical locations (deployment scenarios) based on the setting in Table \ref{BS_con}.}
 	\begin{tabular}{p{.1cm} p{.15cm}|p{2.5cm}|p{2.5cm}|p{2.5cm}|}
 		\toprule
 		\midrule
 		\multicolumn{5}{c}{Technologies}\\ 
 		\cmidrule{3-5}
 		&& 2DBF with NOMA  &  3DBF with OMA& 3DBF with NOMA\\ 
 		\cmidrule{3-5}
 		\multicolumn{1}{c}{\multirow{4}{*}{\begin{sideways}Scenarios\end{sideways}}}   &
 		\multicolumn{1}{l}{Indoor hotspot}& 2.7x & 2.3x & 4.6x\\
 		\cmidrule{2-5}
 		\multicolumn{1}{c}{}    &
 		\multicolumn{1}{l}{Dense urban}& 2.3x & 1.9x & 3.8x\\
 			\cmidrule{2-5}
 		\multicolumn{1}{c}{}    &
 		\multicolumn{1}{l}{Urban}& 1.7x  & 1.6x & 3.2x\\
 		\cmidrule{2-5}
 		\multicolumn{1}{c}{}    &
 		\multicolumn{1}{l}{Suburban/Rural} & 1.3x & 1.4x & 2.1x\\
 		\midrule
 	\end{tabular}
 	\label{Per_Dep_com}
 	\vspace{-0.2cm}
 \end{table*} 

$\bullet~~${\textit{Signaling overhead of 2D and 3D channels}}

Radio resource management in NR is based on the  measurement of reference signals such as synchronization and channel information, i.e., CSI \cite{bertenyi20185g}. Some of them are used for time and frequency synchronization in the random access procedure\footnote{Random access enables each user to access a cell.} and  demodulation of signals at receiver sides (via demodulation reference signal).
     In this paper, signaling overhead refers to the amount of feedback required for the 2D and 3D channels estimation. In the 3D channel,  channel information is required \textcolor{black}{for} both horizontal and vertical directions, in contrast to the 2D channel model that is just in the horizontal direction. As a result, the predefined overhead signal is increased rapidly in 3D by the number of antennas and \textcolor{black}{the} BSs in the network. We compare the 2D and 3D channels with signaling overhead metric in Table \ref{Overhead_Com}, where, assuming the number of vertical antennas is $M_{T}'$. We infer that  the ratio of signaling overhead of 3DBF over 2DBF is closed together in high order of transmit antennas.     
 
  \begin{table*}
 	\centering
 	\caption{Assumptions and main characteristics for deriving different deployment scenarios for 5G based on the 3GPP specifications.}
 	\label{Dep_Sce}
 	\begin{tabular}{ |c|c|c|c|l|}
 		\hline
 				&Overall User density ($\text{per}{\text{Km}}^{2}$)&  Active user data rate (Mbps)&Activity factor \\	
 		\hline	
 		Indoor& $250000$& $200$& $30\%$
 		\\
 		\hline
 		Dense Urban &25000 &300&$10\%$	
 		 		\\
 		\hline
 		Urban &10000 &50&$20\%$
 		\\
 		\hline
 		Rural &100 &50&$20\%$
 		\\
 		\hline
 	\end{tabular}
 \end{table*}
 
 \begin{table*}[t]
 	\centering
 	\caption{ Network configuration in different deployment scenarios, other parameters are based on Table \ref{Sim_Set}.}
 	\begin{tabular}{p{.1cm} p{1cm}|p{1.3cm}|p{2cm}|p{1.5cm}|p{2.3cm}|p{1cm}|}
 		\toprule
 		\midrule
 		\multicolumn{6}{c}{Deployment scenario}\\ 
 		\cmidrule{3-6}
 		&&Indoor& Dense urban  & Urban & Suburban/ Rural\\ 
 		\cmidrule{1-6}
 		\multicolumn{1}{c}{\multirow{4}{*}{\begin{sideways}Network Configuration \end{sideways}}}   &
 		\multicolumn{1}{l}{BS Height}&15 m& 25 m & 35 m & 45 m\\
 		\cmidrule{2-6}
 		\multicolumn{1}{c}{}    &
 		\multicolumn{1}{l}{Maximum transmit power} & 10 Watts &  20  Watts& 40 Watts & 60 Watts\\
 		\cmidrule{2-6}
 		\multicolumn{1}{c}{}    &
 		\multicolumn{1}{l}{The number of carriers} & 100 & 64  &  40&20\\
 		\cmidrule{2-6}
 		\multicolumn{1}{c}{}    &
 		\multicolumn{1}{l}{Inter-site distance} &20 m &200 m & 500 m & 5000 m  \\
 		 		\cmidrule{2-6}
 		\multicolumn{1}{c}{}    &
 		\multicolumn{1}{l}{Number of antennas ($M_T$)} &15& 12 & 8 &  5 \\
 		\cmidrule{2-6}
 		\multicolumn{1}{c}{}    &
 		\multicolumn{1}{l}{User Density} &200/$\text{km}^2$& 150/$\text{km}^2$ & 70/$\text{km}^2$ &  30/$\text{km}^2$ \\
 		\midrule
 	\end{tabular}
 	\label{BS_con}
 	\vspace{-0.2cm}
 \end{table*}

 \begin{table*}
 	\centering
 	\caption{Complexity order and performance comparison of 2DBF with OMA and 3DBF with NOMA according to Fig. \ref{userno}.}
 	\label{Com_Order_Tec}
 	\begin{tabular}{ |c|c|c|c|l|}
 		\hline
Metric 		&2DBF OMA&3DBF NOMA&3D-NOMA over 2D-OMA \\	
 		\hline	
 		Complexity&$\mathcal{O}\left(2V+F+F\times N\times K\right)$&$\mathcal{O}\left(2V+F+4\times F\times N\times K+ F\times N\right)$& $0.75 \% \uparrow$
 		\\
 		\hline
 		Objective &71 &105&$47\% \uparrow$
 		\\
 		\hline	
 	\end{tabular}
 \end{table*}
		\begin{figure}
	\centering
	\includegraphics[width=.52\textwidth]{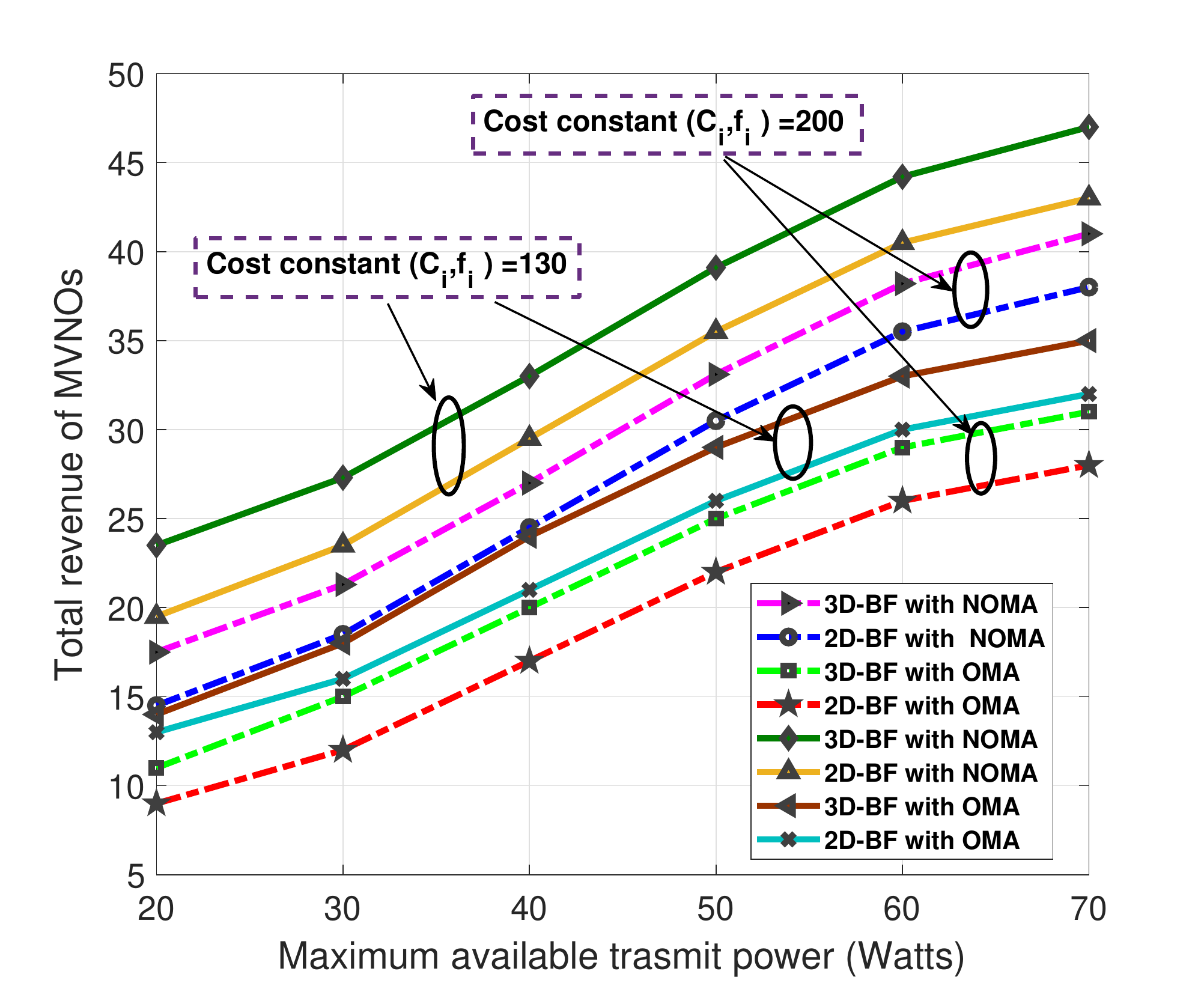}
	\caption{The total Revenue of MVNOs versus the maximum transmit power budget, with $U=30$~Watts, $N_i=25$, $F_i=5$ and $M_T=5$.}
	\label{Power_Eff} 
\end{figure}

\begin{table*}
	\centering
\caption{Signaling overhead comparison of the 2DBF and 3DBF.}
	\label{Overhead_Com}
	\begin{tabular}{ |c|c|c|c|}
		\hline
	Metric	&2DBF&3DBF&3DBF  over 2DBF \\	
		\hline	
		Overhead&$M_{T}\times F=13\times 10=130 $&$(M_{T}+M_{T}')\times F=(13+2)\times 10=150$& $ 15\% \uparrow$
		\\
		\hline	
	\end{tabular}
\end{table*}
\subsubsection{{Comparison Between the Proposed Solution Methods}}
In this subsection, we investigate different solutions, namely, JS-CIV, ASM, and optimal in which each of them is briefly discussed in the following.
\\$\bullet~~${\textit{JS-CIV}}

We propose JS-CIV to solve problem \eqref{opt30}. To this end, we convert non-convex main problem \eqref{opt0} to convex problem \eqref{opt30}. The main step is to convert the binary variable, i.e., $\boldsymbol{\rho}$ to the predefined optimization variable, i.e., $\bold{W}$. Then, by exploiting the change variable method and first Taylor approximation, we formulate problem \eqref{opt30}. Finally, we solve it with SCA (Al. \ref{SCA}).
Note that in this method, we solve all considered optimization variables jointly. 
 \\$\bullet~~${\textit{ASM}}
 
 Based on the ASM method, we divide main problem  \eqref{opt0} into two sub-problems, namely, joint subcarrier and user association and 3D beam allocation. Then, we solve each on them separately and iterate between them until stopping \textcolor{black}{criteria are met.} Note that in this method, we also use SCA-based difference of concave functions  and first order of Taylor approximation in the 3D beam allocation sub-problem.
 \\$\bullet~~${\textit{Optimal}}
 
  We devise the exhaustive search method to investigate  the optimality gap of the iterative sub-optimal solutions in the considered parameters. 
  Fig. \ref{solution} illustrates  the performance comparison of the aforementioned solutions. As seen, the optimality gap of the proposed solution, i.e., JS-CIV is nearly   $8$\%, while this gap is $15$\% for  ASM. This is because of disjoint solving of the integer and continues variables in ASM, in contrast to jointly solving of them with SCA.    Moreover, we compare the complexity order of the mentioned solutions that is stated in Table  \ref{Com_Order}. In this table, $\Omega$ is obtained from \eqref{nom_cons}, $E^{\text{Pos}}$, and $Q^{\text{Pos}}$ are different possible values in the exhaustive search method for $\bold{W}$ and $\boldsymbol{\theta}$, respectively. 
\begin{figure}
	\centering
	\includegraphics[width=.53\textwidth]{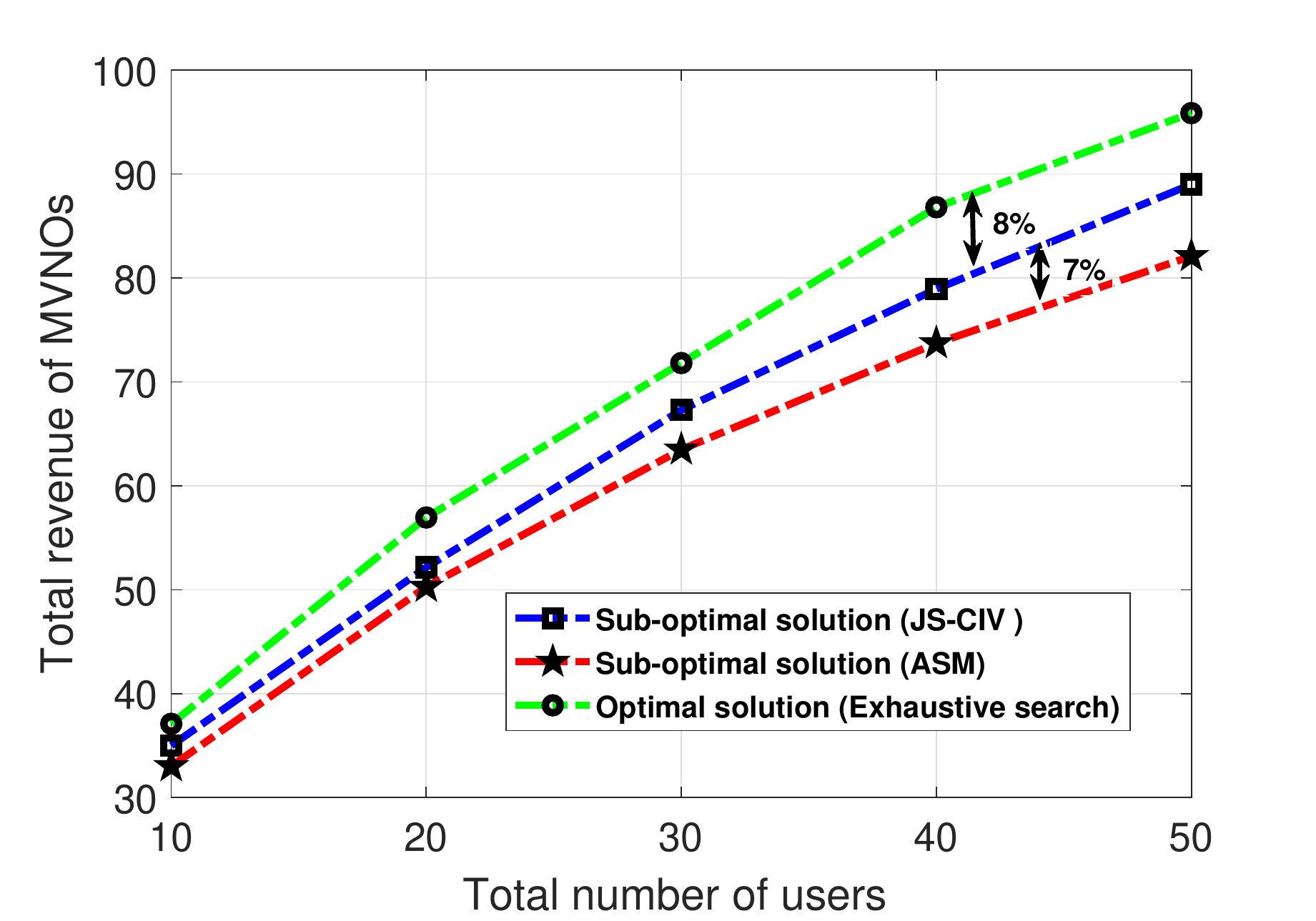}
	\caption{Total Revenue of MVNOs versus the total number of users in networks, with $P_{\max}^{\text{MBS}}=30$~Watts,  $P_{\max}^{\text{FBS}}=5$~Watts, $F_i=4$ and $M_T=5$ for different solutions methods.}
	\label{solution} 
\end{figure}
\begin{table*}
	\centering
	\caption{\textcolor{black}{Complexity order and performance comparison of the proposed solutions according to Table \ref{Sim_Set} and Fig. \ref{solution}.}}
	\label{Com_Order}
	\begin{tabular}{ |c|c|c|c|}
		\hline
		&JS-CIV&ASM	&Optimal \\	
		\hline	
		 Complexity& $ \varpi =\frac{\log\left(\frac{\Omega}{t^0\zeta}\right)}{\log(\vartheta)}$&  $ \frac{\log\left(\frac{2(V+ F\times N\times K)+N\times F\times I}{t^0\zeta}\right)}{\log(\vartheta)}+\frac{\log\left(\frac{2V+ F\times N(1+K)+F\times I}{t^0\zeta}\right)}{\log(\vartheta)}$& $(2E^{\text{Pos}})^{(U\times N)}\times (Q^{\text{Pos}})^{(U\times M\times F\times I)}$
		 \\
 		\hline
	Performance & $8\% \downarrow$& $15\%\downarrow$& Baseline
		\\
		\hline	
	\end{tabular}\label{Table signaling}
\end{table*}
	\section{{conclusion}}\label{conclusions} 
In this paper, \textcolor{black}{ we studied a novel  multi-user  MISO-3DBF-NOMA-based  HetNet from the cost and the system performance perspectives.} Our proposed optimization problem is based on the maximizing MVNO's total revenue under resource limitation and guaranteeing the user's required QoS and contains both integer and continues variables.  To solve the proposed optimization problem efficiently, we propose a new algorithm called JS-CIV where \textcolor{black}{  it solves  the main problem without adopting ASM, by exploiting the change variables method.} 
 Our evaluations, benchmarked against a baseline, demonstrated that the impact of  utilizing 3DBF in NOMA-based HetNet from the performance and MVNO's revenue perspectives. Furthermore, the convergence of the proposed algorithm is verified  in our evaluations. Our numerical results show that 3DBF has good performance in massive antennas systems. Also \textcolor{black}{NOMA is very appropriate for
 	ultra high density environments which require very high spectrum. Indeed, NOMA-3DBF significantly improves the system throughput and subsequently the MVNO's revenue can be improved with approximately $54$\% in massive connection networks. Moreover, we investigated the proposed system model \textcolor{black}{for} different deployment scenarios and signaling overhead as a metric. Moreover, JS-CIV outperforms ASM and its optimality gap is nearly $8$\%. As a future work, we aim to study
 	the proposed algorithm for imperfect CSI.}   

\hyphenation{op-tical net-works semi-conduc-tor}
\bibliographystyle{ieeetr}
\bibliography{citationMISOBeamforming}{}
\end{document}